\theoremstyle{plain}
\newtheorem{theorem}{Theorem}
\newtheorem{definition}{Definition}
\newtheorem{lemma}{Lemma}
\newtheorem{assume}{Assumption}
\newtheorem{remark}{Remark}
\newtheorem{corollary}[theorem]{Corollary}
\newtheorem{Proposition}[theorem]{Proposition}
\newcommand{\bx}{ {\boldsymbol x} }
\newcommand{\by}{ {\boldsymbol y} }
\begin{document}

\title{\bf A reduced-rank approach to predicting multiple binary responses through machine learning}

\author{The Tien Mai}

\date{
\begin{small}
Department of Mathematical Sciences, 
Norwegian University of Science and Technology,
Trondheim 7034, Norway.
\\
Email: the.t.mai@ntnu.no
\end{small}
}

\maketitle

\begin{abstract}
This paper investigates the problem of simultaneously predicting multiple binary responses by utilizing a shared set of covariates. Our approach incorporates machine learning techniques for binary classification, without making assumptions about the underlying observations. Instead, our focus lies on a group of predictors, aiming to identify the one that minimizes prediction error. Unlike previous studies that primarily address estimation error, we directly analyze the prediction error of our method using PAC-Bayesian bounds techniques. In this paper, we introduce a pseudo-Bayesian approach capable of handling incomplete response data. Our strategy is efficiently implemented using the Langevin Monte Carlo method. Through simulation studies and a practical application using real data, we demonstrate the effectiveness of our proposed method, producing comparable or sometimes superior results compared to the current state-of-the-art method.
\end{abstract}

\paragraph*{Keywords:} binary responses, low-rank predictors, PAC-Bayesian inequalities, Langevin Monte Carlo, missing data.

\section{Introduction}
The relationship between multiple response variables and a set of predictors has been a topic of ongoing research and interest in the literature, with numerous studies dedicated to understanding and exploring this connection. One area of particular interest in this field is the use of reduced rank regression, which involves using a low-rank constraint to linearly connect the response variables and the predictors, see e.g. \cite{izenman2008modern,cook2018introduction,reinsel2023multivariate,giraud2021introduction}. This approach has been widely studied and applied, with numerous works published on the topic, including those by \cite{anderson1951estimating,izenman1975reduced,bunea2011optimal,geweke1996bayesian}, and many others. From frequentist to Bayesian approaches, there have been a wide range of methods and techniques employed to analyze and model these relationships, \cite{corander2004bayesian,chakraborty2020bayesian,alquier2013bayesian,goh2017bayesian,yang2020fully,kleibergen2002priors,chen2013reduced,she2017robust}.

However, despite the extensive research in this area, most studies have focused on real-valued responses. In many applications, the entries of the response matrix are binary, that is, they are in the set  $ \{ -1 ,1\} $. For example, the treatment responses from multiple drugs can be recoded as binary or categorical when measured from each cell line, as seen in studies by \cite{hayes2006racial,greenlund2005using,mishra2022negative}. This highlights the need for further research on the use of binary or categorical response variables in reduced rank regression and other multivariate modeling techniques.

The problem of modeling multiple binary response variables has received some limited attention in recent years, with few studies proposing reduced-rank regression models as a solution. One notable example is the paper by Luo et al. \cite{luo2018leveraging}, which proposed a mixed-outcome reduced-rank regression model to handle response matrices that include both binary and count data, and also addressed the issue of missing data in the responses. Another recent study, carried out independently of \cite{luo2018leveraging}, is the paper by Park et al. \cite{park2022low}, which additionally considered row-wise sparse constraints in addition to the low-rank assumption, but only for fully observed binary response matrices. The main idea behind these studies is to assume a marginal logistic regression model to relate the binary response and the covariates, and then employ a penalized maximum likelihood method. 

However, the studies in \cite{luo2018leveraging} and \cite{park2022low} primarily focus on recovering the parameter matrix of interest, and provide estimation error rates for their estimators. While their results demonstrate that their estimators are consistent when estimating a low-rank matrix, they do not provide any guarantee on the prediction error or misclassification error. This highlights the need for further research in this area, particularly in terms of developing methods that not only accurately estimate the parameter matrix, but also provide guarantees on the performance of predictions and classification.

One of the key challenges in addressing the problem of multiple binary responses is the lack of robust and generalizable models that can accurately predict and classify the outcomes. In this paper, we aim to address this gap by taking a machine learning approach and adopting a classification-based method to deal with binary output. Unlike traditional methods that rely on parametric models, we will consider a set of prediction matrices and seek to find the one that yields the best prediction error. This approach is built on the principles of statistical learning theory \cite{vapnik}, where the zero-one loss is used as a measure of prediction error, and the risk of the classifier is controlled by a PAC (probably approximately correct) bound. However, the non-convex nature of the zero-one loss function makes it computationally intractable. An alternative is the use of a convex surrogate such as the hinge loss, which was introduced in \cite{zhang2004statistical}, has been shown to be effective in a variety of machine learning tasks and has the added benefit of being computationally efficient. By leveraging this method, we aim to provide a robust and generalizable model for predicting and classifying multiple binary responses.

In this work, we propose a novel approach to addressing the problem of multiple binary responses that combines elements of both Bayesian and machine learning methodologies. Specifically, we propose a pseudo-Bayesian approach that utilizes a notion of risk based on the hinge loss, rather than relying on a likelihood function. Our approach is based on the principles of PAC-Bayesian theory \cite{STW,McA,herbrich2002pac,catonibook,dalalyan2008aggregation,seldin2012pac,AlquierRidgway2015,seldin2010pac,germain2015risk}, which provides theoretical guarantees on the prediction and misclassification error for our method. It is worth mentioning that using loss functions in replacing the likelihood is becoming popular in the so-called generalized Bayesian inference in recent years as documented for example in \cite{matsubara2022robust,jewson2022general,yonekura2023adaptation,medina2022robustness,grunwald2017inconsistency,bissiri2013general,lyddon2019general,syring2019calibrating}.

The use of a hinge loss-based risk function allows us to overcome some of the limitations of traditional likelihood-based Bayesian models, particularly when dealing with binary response variables. Unlike traditional likelihood functions, which can be difficult to model and computationally intensive to compute, the hinge loss function is convex and can be easily optimized. To further improve the efficiency and practicality of our proposed approach, we also develop an efficient gradient-based sampling method based on Langevin Monte Carlo. This method allows us to approximate the computation of our proposed method, making it more computationally tractable and suitable for large-scale applications. Overall, our proposed approach offers a novel and promising approach to modeling and predicting multiple binary responses, providing both theoretical guarantees and practical computational methods for its implementation.

Similar to the approach proposed in \cite{luo2018leveraging}, our proposed method has the capability to handle incomplete response matrices. This is achieved by extending our approach to account for missing data in the response matrix, which allows for greater flexibility and applicability of our method. The extension of our method to handle missing data is relatively simple and does not introduce any additional complexity to the overall approach. This allows for our method to be applied to a wider range of datasets with incomplete or missing data. This capability is especially useful in real-world applications, where missing data is often present and traditional methods may struggle to handle such cases effectively. For example, in studies involving medical treatments, it is not uncommon for certain patients to drop out of the study, resulting in missing data in the response matrix. Our proposed method allows for the inclusion of such missing data, providing a more comprehensive and realistic analysis of the treatment outcomes. Additionally, in observational studies, missing data can be a common problem due to various factors such as non-response, measurement error or data collection issues. Our proposed method's ability to handle missing data would be beneficial in these scenarios as well.

The remainder of the paper is structured as follows. Section \ref{sc_blr} provides a detailed description of the problem statement and presents our proposed method, along with its theoretical results. An extension to handle incomplete response data is also discussed in this section. In Section \ref{sc_numstudy}, we describe the Langevin Monte Carlo method used to compute our proposed method and present numerical studies on both synthetic and real datasets to demonstrate its performance. Conclusions and discussions are given in Section \ref{sc_conclus}. All technical proofs are gathered in Appendix \ref{sc_proof}.

\section{Problem and method}
\label{sc_blr}
\subsection{Problem statement}
We formally consider the following multiple binary responses with a set of common covariates problem: for units $ i = 1,\ldots, n $ with covariate vectors $ x_i \in \mathbb{R}^p $, there exist $ q $ binary responses $ y^k_{i} \in \{-1,1\} $ for $ k = 1, \ldots, q $. From a classification perspective, it would be natural to use a linear predictor as a function from $ \mathbb{R}^p $ to $ \{-1,1\} $ in the following way: when $ \bx_i $ is revealed, $M^{(k)} \in \mathbb{R}^p $ predicts $ \by_{i}^k $ by ${\rm sign}(\bx_i M^{(k)})$. 

For the matrix notation, let's define the binary response matrix $ Y = [ y^1, \ldots, y^q ] \in \{-1,1\}^{n\times q} $ and the covariate matrix $ X = [x_1^\top, \ldots, x_n^\top]^\top \in \mathbb{R}^{n\times p}  $. With multiple responses, the predictors can be written in matrix-form as $ M = [ M^{(1)}, \ldots, M^{(q)} ] \in \mathbb{R}^{p\times q} $.

The ability of the predictor to predict a new entry of the matrix is then assessed by the risk
\begin{align*}
	R(M) 
	= 
\mathbb{E} \left[ \mathbbm{1}_{( Y_{11} \cdot (X M)_{11} < 0)}\right],
\end{align*}
and its empirical counterpart is:
\begin{align*}
	r(M) 
	= 
	\frac{1}{nq}\sum_{i=1}^n \sum_{j=1}^q 
	\mathbbm{1}_{(Y_{ij} (XM)_{ij}<0)} .
\end{align*}
From the standard approach in classification
theory~\cite{vapnik,devroye1997probabilistic}, the best possible classifier is the Bayes classifier, $ M^B $, such that
$$
R(M^B)= \inf_M R(M).
$$

The anticipated property of the Bayes matrix, $ M^B $, is that it exhibits a low-rank structure or can be effectively approximated by a low-rank matrix.

For the sake of simplicity, we put $ \overline{R} = R(M^B) $ and $\overline{r}=r(M^B)$. The goal is to find an estimator $ \hat{M} $ that yields the minimal excess risk $ R(\hat{M}) -\overline{R} $.

While the risk $R(M)$ has a clear interpretation, working with its empirical counterpart $r(M)$ is challenging as it is non-smooth and non-convex. A common approach to overcome this issue is to replace the empirical risk with a convex surrogate~\cite{zhang2004statistical}. In this paper, we primarily focus on the hinge loss, which results in the following hinge empirical risk:
\begin{align*}
	r^h(M) 
	= 
	\frac{1}{nq}\sum_{i=1}^n \sum_{j=1}^q ( 1 - Y_{ij} (XM)_{ij})_+ \, ,
\end{align*}
where $ (a)_+ := \max(a,0),\forall a \in \mathbb{R} $.

\subsection{Estimation procedure}
Building upon the work previously done in the field of PAC-Bayesian theory, we define the pseudo-posterior distribution as follows:
\begin{align*}
\widehat{\rho}_\lambda(M)
\propto
\exp[-\lambda r^h(M)] \pi(M)
\end{align*}
where $\lambda>0$ is a tuning parameter that will be discussed later and $\pi(M)$ is a prior distribution, given in \eqref{prior_scaled_Student}, that promotes (approximately) low-rankness on the parameter matrix $M$.

The term $ \widehat{\rho}_\lambda $, known as the Gibbs posterior, can be interpreted as the posterior distribution under a Bayesian framework, where $\pi$ represents the prior distribution for the parameter $M$. However, this Bayesian interpretation is not essential for understanding the approach, as it relies on the proportionality of $\exp[-\lambda r^h(M)]$ to a likelihood function. The motivation behind defining $ \widehat{\rho}_\lambda $ stems from the minimization problem in Lemma \ref{lemma_donvara} rather than Bayesian principles. It is not necessary to have a likelihood function or a complete model; only the empirical risk based on the hinge loss function is required.

Nonetheless, we still refer to $\pi$ as the prior and $ \widehat{\rho}_\lambda $ as the pseudo-posterior. The measure $ \widehat{\rho}_\lambda $ can be seen as an adjusted version of $\pi$. Comparing two parameters, $m_1$ and $m_2$, if $ r^h(m_1) < r^h(m_2) $, then $\exp[-\lambda r^h(m_1)] > \exp[-\lambda r^h(m_2)] $ for any $ \lambda >0 $. This implies that, relative to $\pi$, $ \widehat{\rho}_\lambda $ assigns more weight to $ m_1 $ than to $ m_2 $. The adjustment in the distribution thus favors the parameter value that results in a smaller in-sample hinge empirical risk. The tuning parameter $ \lambda $ controls the degree of adjustment. The choice of $ \lambda $ will be further explored in subsequent sections. This pseudo-Bayesian approach has been previously studied in various low-matrix estimation problems, such as \cite{alquier2013bayesian,mai2017pseudo,mai2015,cottet20181,mai2023low}.

In this work, we have opted to use a spectral scaled Student prior distribution, as follows, with a parameter $\tau>0$,
\begin{align}
\label{prior_scaled_Student}
\pi(M)
\propto
\det (\tau^2 \mathbf{I}_{p} + MM^\intercal )^{-(p+q+2)/2}.
\end{align}

\noindent This prior can induce low-rankness of matrices $ M $, as it can be verified that
$
\pi(M)
\propto
\prod_{j=1}^{p} (\tau^2 + s_j(M)^2 )^{- (p+q+2)/2 }
$, where $ s_j(M) $ denotes the $j^{th}$ largest singular value of $ M $. It means that this prior follows a scaled Student distribution evaluated at $ s_j(M) $ which induces approximately sparsity on the $s_j(M)$ \cite{dalalyan2012sparse}. Thus, under this prior distribution, most of the $s_j(M)$ are close to $0$ and that $M$ is approximately low-rank. This prior has been used before in image denoising \cite{dalalyan2020exponential}, bilinear regression \cite{mai2023bilinear} and \cite{yang2018fast} for matrix completion. Even though this prior distribution is not conjugate in our problem, it is advantageous to utilize the Langevin Monte Carlo, a sampling method that relies on gradients for implementation purposes.

\subsection{Theoretical results}
\label{sec:theoretical_bound}

\begin{assume} 
	\label{assume_trueBaye}
	We assume that $ r^h(M^B) \leq 2r(M^B) $.
\end{assume}
\noindent
The above assumption can be relaxed to that there exist a positive constant $ C'>0 $ such that $ r^h(M^B) \leq (1+C')r(M^B) $. All the theoretical results in this work are subjected to this assumption.

In this work, we make use of the Mammen and Tsybakov's margin assumption in~\cite{mammen1999smooth}.

\begin{assume}[Margin assumption] 
		\label{assume_margin}
	We assume that there is a constant $C \geq 1 $ such that: 
	\begin{equation*}
	\mathbb{E}\left[\left(\mathbbm{1}_{Y (XM)\leq 0} - 
	\mathbbm{1}_{Y (XM^B)\leq 0} \right)^2\right] \leq 
	C[R(M)-\overline{R}].
	\end{equation*}
\end{assume}
\noindent As an example, in the noiseless case where $Y = {\rm sign} (XM^B) $ almost surely,
we have that
\begin{equation*}
\mathbb{E}\left[\left(\mathbbm{1}_{Y (XM)\leq 0} 
- 
\mathbbm{1}_{Y (XM^B)\leq 0} \right)^2\right]
= 
\mathbb{E}\left[\mathbbm{1}_{Y (XM) \leq 0}^2\right]
= 
\mathbb{E}\left[\mathbbm{1}_{Y (XM) \leq 0}\right] = R(M) = R(M)-\overline{R}.
\end{equation*}
Thus, the margin assumption is satisfied with $C=1 $.

We now present a theoretical bound on the expected risk for a random estimator of $ M $ generated from the pseudo-posterior $ \widehat{\rho}_\lambda(M) $.

\begin{theorem}
	\label{th:theo}
	Assume that Assumption \ref{assume_margin} is satisfied and put $ r^* = {\rm rank} (M^B ) $ 	and with $ \tau^2 =  (p+q)/(2q^2pn \|X\|_{_F}^2 ) $. Then, 
	for any $\epsilon \in (0,1) $ and for $\lambda = 
2 nq/(3C + 2) $, $ \varsigma \in(0,1) $,
	with probability at least $1-2\epsilon$,
	\begin{align*}
	\int R d\widehat{\rho}_\lambda 
	\leq 
	2.5\overline{R} + \Xi_{C,\varsigma} 
	\frac{ r^* (q+p+2)	
		\log \left( 
		1+ \frac{q\| X\|_{_F} \| M^B \|_{_F} \sqrt{np} }{ \sqrt{(p+q) r^* }} \right) 
	+
	\log(1/\epsilon) }{nq},
	\end{align*}
	where $ \Xi_{C,\varsigma}$ is a known constant that depends only on $\varsigma,C $.
\end{theorem}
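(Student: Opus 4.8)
The plan is to run the standard PAC--Bayesian argument organized around the Donsker--Varadhan variational formula (Lemma~\ref{lemma_donvara}), exploiting two structural facts. The first is the pointwise domination of the zero--one loss by the hinge loss, $\mathbbm{1}_{Y(XM)<0}\le (1-Y(XM))_+$, which lets me control the misclassification risk $R$ through the hinge quantities that actually define the pseudo-posterior. The second is that $\widehat{\rho}_\lambda\propto e^{-\lambda r^h}\pi$ is the exact minimizer of $\rho\mapsto \lambda\int r^h\,d\rho+\mathrm{KL}(\rho\|\pi)$, so I can trade $\widehat{\rho}_\lambda$ for any convenient competitor $\rho$ at the cost of a controllable KL term. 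The four-step chain is then: (i) a concentration step turning empirical hinge risk into true risk; (ii) the variational inequality; (iii) a localized choice of $\rho$ around $M^B$; and (iv) a KL bound yielding the rank-dependent complexity.

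First I would build the concentration ingredient. For fixed $M$, I form the centered increment $\mathbbm{1}_{Y(XM)\le 0}-\mathbbm{1}_{Y(XM^B)\le 0}$ and apply a Bernstein-type exponential-moment bound. Here Assumption~\ref{assume_margin} does the essential work: it bounds the second moment of this increment by $C[R(M)-\overline{R}]$, i.e.\ by the excess risk itself, which is exactly the mechanism upgrading the rate from $1/\sqrt{nq}$ to the fast $1/(nq)$ rate in the statement. Integrating the exponential moment against $\pi$, taking logarithms, and invoking Donsker--Varadhan gives, with probability at least $1-\epsilon$ and simultaneously over all $\rho$, an oracle inequality of the schematic form
\begin{equation*}
\int R\,d\widehat{\rho}_\lambda \;\le\; \inf_{\rho}\left\{ a\int R\,d\rho + \frac{b\,[\mathrm{KL}(\rho\|\pi)+\log(2/\epsilon)]}{nq}\right\},
\end{equation*}
where the constants $a,b$ depend only on $C$ and the free parameter $\varsigma\in(0,1)$ after substituting $\lambda=2nq/(3C+2)$; the hinge-to-zero--one domination is what lets me pass between $r^h$ on the left and $R$ on the right, and the two one-sided deviations needed (one for the pseudo-posterior, one for the empirical risk of the competitor) account for the $1-2\epsilon$ confidence.

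Next I would specialize $\rho$ to a distribution concentrated on a small spectral neighborhood of $M^B$ --- a truncated, shifted version of the prior supported on matrices $M$ with $\|X(M-M^B)\|_F$ small, so that $\int R\,d\rho$ stays of order $\overline{R}$. This localization is what produces the leading factor: the combination of the margin constant $C$ entering the Bernstein step, the residual gap between the hinge surrogate and the zero--one loss at $M^B$, and the explicit choice of $\lambda$ collapses the constant $a\int R\,d\rho$ into the stated $2.5\,\overline{R}$.

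The hard part will be the final step: bounding $\mathrm{KL}(\rho\|\pi)$ against the spectral scaled Student prior~\eqref{prior_scaled_Student}. Since $\pi$ factorizes through its singular values as $\prod_j(\tau^2+s_j(M)^2)^{-(p+q+2)/2}$, I would split the KL along the $r^*=\mathrm{rank}(M^B)$ ``active'' singular directions and the remaining nearly-null directions, controlling each contribution with the heavy tail of the Student density and the neighborhood radius. The calibration $\tau^2=(p+q)/(2q^2 p n\|X\|_F^2)$ is tuned precisely so that the prior's scale matches the required localization, so that the active directions contribute the logarithmic factor $\log\bigl(1+q\|X\|_F\|M^B\|_F\sqrt{np}/\sqrt{(p+q)r^*}\bigr)$ while the exponent of the prior supplies the prefactor $r^*(q+p+2)$. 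Assembling the oracle inequality with this KL estimate, absorbing all $C,\varsigma$-dependent constants into $\Xi_{C,\varsigma}$, and taking the union bound over the two one-sided events yields the claimed bound.
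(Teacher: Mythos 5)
Your proposal follows essentially the same route as the paper: a Bernstein-type exponential moment bound sharpened by the margin assumption, Donsker--Varadhan to pass to an oracle inequality over all $\rho$, a competitor obtained by shifting the prior to be centered at $M^B$, and a KL bound exploiting the singular-value factorization of the scaled Student prior (the paper invokes Lemmas 1 and 2 of Dalalyan's 2020 work for the second-moment and KL estimates, which is exactly the computation you sketch), with the second $\epsilon$ coming from a separate deviation bound relating $\overline{r}$ to $\overline{R}$. The only cosmetic difference is that the paper uses the untruncated translated prior $\tilde\rho_{M^B}(M)=\pi(M^B-M)$ rather than a truncated neighborhood, controlling $\int r^h d\rho$ via a Lipschitz bound and the prior's second moment instead of a hard support restriction.
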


The proof of the above theorem is given in Appendix \ref{sc_proof}. The technical argument used in the proof is known as ``PAC-Bayesian bounds", introduced in \cite{STW,McA} as a way to provide empirical bounds on the prediction risk of Bayesian-type estimators. However, it is well known that the PAC-Bayesian approach also comes with a set of powerful technical tools to establish non-asymptotic bounds as documented in \cite{catoni2003pac,catoni2004statistical,catonibook} that have been explored in this paper. For an in-depth exploration of PAC-Bayes bounds, including recent surveys and advancements, readers are encouraged to refer to the following references \cite{guedj2019primer} and \cite{alquier2021user}.

\begin{remark}
It is important to mention that the result of the above theorem has an adaptive characteristic in the sense that the estimator does not depend on the rank  $ r^* = {\rm rank}(M^B) $. When the rank $ r^* $ is small, the prediction error will be similar to the Bayes error, $ \overline{R} $, even with a small sample size. This type of result is commonly referred to as an `oracle inequality' as it suggests that our estimator performs as well as if we had access to the rank of $ M^B $ through an oracle. Additionally, it is noteworthy that $ r^* \neq 0 $ is not a necessary condition in the above formula. If $r^* =0 $, we interpret $0\log(1+0/0) $ as $0 $. 
\end{remark}

\begin{corollary}
	\label{co:theo}
	In the  case that $Y={\rm sign}(XM^B)$ a.s.,
	for any $ \epsilon \in (0,1) $ and for $\lambda = 2 nq/5 $,
	with probability at least $1-2\epsilon$,
	\begin{align}
	\int R d\widehat{\rho}_\lambda 
	\leq 
	\Xi_{1,\varsigma}'
		\frac{ r^* (q+p+2)	\log \left( 
		1+ \frac{q\| X\|_{_F} \| M^B \|_{_F} \sqrt{np} }{ \sqrt{(p+q) r^* }} \right) 
		+
		\log(1/\epsilon) }{nq}
	\end{align}
	where $ \Xi_{C,\varsigma}' = \Xi_{1,\varsigma} $.
\end{corollary}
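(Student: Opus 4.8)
The plan is to read off this corollary as a direct specialization of Theorem \ref{th:theo} to the noiseless regime $Y = {\rm sign}(XM^B)$ almost surely. Two consequences of this assumption drive the whole argument. First, since $Y_{11} = {\rm sign}((XM^B)_{11})$ gives $Y_{11}(XM^B)_{11} = |(XM^B)_{11}| \geq 0$, the Bayes classifier misclassifies no entry, so $\overline{R} = R(M^B) = \inf_M R(M) = 0$. Second, as already verified in the display immediately following Assumption \ref{assume_margin}, the margin assumption holds with constant $C = 1$ in precisely this case.

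With these two facts in hand, I would instantiate Theorem \ref{th:theo} at $C = 1$. The prescribed tuning parameter then becomes $\lambda = 2nq/(3C+2) = 2nq/5$, which matches the value stated in the corollary, and the choice $\tau^2 = (p+q)/(2q^2 p n \|X\|_F^2)$ together with the logarithmic factor carries over unchanged since neither depends on the noise level. The master bound reads
\begin{equation*}
\int R \, d\widehat{\rho}_\lambda \leq 2.5\,\overline{R} + \Xi_{1,\varsigma}\,\frac{r^*(q+p+2)\log\!\left(1 + \frac{q\|X\|_F \|M^B\|_F \sqrt{np}}{\sqrt{(p+q)r^*}}\right) + \log(1/\epsilon)}{nq},
\end{equation*}
so substituting $\overline{R} = 0$ annihilates the leading term and leaves exactly the claimed inequality once we set $\Xi_{1,\varsigma}' = \Xi_{1,\varsigma}$.

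Since this is a corollary rather than a standalone result, there is essentially no technical obstacle to overcome: the entire content is the observation that the noiseless hypothesis simultaneously forces $\overline{R} = 0$ and $C = 1$, after which the conclusion is pure substitution into the theorem. The only point warranting a moment's care is the boundary behavior when $(XM^B)_{11} = 0$ exactly, but this event is either excluded by the sign convention or has probability zero under any non-degenerate design, so it does not affect the computation of $\overline{R}$.
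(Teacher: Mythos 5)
Your proposal is correct and matches the paper's (implicit) argument exactly: the corollary is obtained by specializing Theorem \ref{th:theo} to the noiseless case, where the display following Assumption \ref{assume_margin} gives $C=1$ (hence $\lambda = 2nq/5$) and $\overline{R}=0$ kills the leading term. The paper offers no separate proof of the corollary, so this direct substitution is precisely the intended reasoning.
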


\begin{remark}
The Theorem \ref{th:theo} and Corollary \ref{co:theo} presented in this study offer novel perspectives on the prediction error, which  complement the previously established theoretical results on estimation errors in \cite{luo2018leveraging} and \cite{park2022low}. These theoretical inequalities allow for the comparison of the out-of-sample error of our predictor to the optimal one, and demonstrate that the prediction error rate is $ r^* \max(q,p)/nq $, with logarithmic terms included. This is a noteworthy contribution to the field as it provides a comprehensive understanding of the relationship between the rank of $M^B$ and the prediction error.
\end{remark}

It is worth mentioning that the utilization of Assumption \ref{assume_margin} plays a crucial role in achieving a 'fast' prediction rate, as demonstrated in Theorem \ref{th:theo}. The initial introduction of this assumption was made in \cite{mammen1999smooth} for classification purposes, and it has since been adopted for ranking tasks in subsequent works such as \cite{clemenccon2008ranking,robbiano2013upper,ridgway2014pac}. In the forthcoming proposition, we present a slower rate result without relying on the usage of Assumption \ref{assume_margin}. The proof is also given in Appendix \ref{sc_proof}.

\begin{Proposition}
	\label{thm_propotion}
	Put $ r^* = {\rm rank} (M^B ) $ 	and with $ \tau^2 =  (p+q)/(2q^2pn \|X\|_{_F}^2 ) $. Then, 
	for any $\epsilon \in (0,1) $ and for $\lambda = 
 2\sqrt{nq/(p+q+2)} $, $ \varsigma \in(0,1) $,
	with probability at least $1-2\epsilon$,
	\begin{align*}
	\int R d\widehat{\rho}_\lambda 
	\leq 
	2\overline{R} + \Psi_{\varsigma} 
	\frac{ r^* \sqrt{(q+p+2)}	
		\log \left( 
		1+ \frac{q\| X\|_{_F} \| M^B \|_{_F} \sqrt{np} }{ \sqrt{(p+q) r^* }} \right) 
		+
		\log(1/\epsilon) }{\sqrt{nq}},
	\end{align*}
	where $ \Psi_{\varsigma}$ is a known constant  depending only on $\varsigma $.
\end{Proposition}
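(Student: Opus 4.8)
The plan is to follow the same PAC-Bayesian route as in the proof of Theorem \ref{th:theo}, changing only the concentration step: where the fast-rate argument exploits Assumption \ref{assume_margin} through a Bernstein-type control of the variance, here I will use a cruder Hoeffding-type control that needs no margin condition, at the price of a slower rate and a different optimal $\lambda$. Writing $R^h(M) = \mathbb{E}[(1 - Y_{11}(XM)_{11})_+]$ for the true hinge risk, the first ingredient is the variational identity of Lemma \ref{lemma_donvara}, which says that $\widehat\rho_\lambda$ is the exact minimiser over probability measures $\rho \ll \pi$ of $\rho \mapsto \int r^h \, d\rho + \mathrm{KL}(\rho\|\pi)/\lambda$. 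The second ingredient is the elementary domination $\mathbbm{1}_{Y(XM)<0} \le (1 - Y(XM))_+$, giving $R(M) \le R^h(M)$ pointwise, so that $\int R \, d\widehat\rho_\lambda \le \int R^h \, d\widehat\rho_\lambda$ and it suffices to control the integrated true hinge risk.

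For the concentration step I would apply the exponential-moment (Chernoff) method to the centred hinge loss. Combined with Hoeffding's lemma, using that the per-entry hinge loss has bounded range over the support that matters, this yields for the comparator $\rho_0$ built below, with probability at least $1-\epsilon$,
\begin{align*}
  \int r^h \, d\rho_0 \;\le\; \int R^h \, d\rho_0 + \frac{\mathrm{KL}(\rho_0\|\pi) + \log(1/\epsilon)}{\lambda} + \frac{\lambda\,\kappa}{nq},
\end{align*}
and, on an independent event of probability at least $1-\epsilon$, the reverse deviation inequality holds uniformly in $\rho$. Here $\kappa$ is an absolute constant coming from the sub-Gaussian bound on the average of $nq$ bounded increments; crucially no variance term appears, so there is nothing to ``self-bound'' as in the fast-rate proof. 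Combining the two events by a union bound with the variational inequality of Lemma \ref{lemma_donvara} (the $\mathrm{KL}(\widehat\rho_\lambda\|\pi)$ terms cancel) gives, with probability at least $1-2\epsilon$,
\begin{align*}
  \int R \, d\widehat\rho_\lambda \;\le\; \inf_{\rho}\left\{ \int R^h \, d\rho + \frac{\mathrm{KL}(\rho\|\pi) + 2\log(1/\epsilon)}{\lambda} + \frac{2\lambda\,\kappa}{nq}\right\}.
\end{align*}

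It remains to choose the comparator and optimise $\lambda$, and this part I would inherit essentially verbatim from the proof of Theorem \ref{th:theo}. I would take $\rho_0$ to be the spectral scaled-Student prior recentred at a suitably rescaled, rank-$r^*$ version of $M^B$ and truncated to a small spectral ball, for which the proof of Theorem \ref{th:theo} already supplies the hinge--to--$0$-$1$ comparison $\int R^h d\rho_0 \le 2\overline{R} + (\text{negligible})$ together with the key estimate $\mathrm{KL}(\rho_0\|\pi) \lesssim r^*(p+q+2)\log\big(1 + q\|X\|_F\|M^B\|_F\sqrt{np}/\sqrt{(p+q)r^*}\big)$, exploiting that only $r^*$ singular values of $M^B$ are nonzero so that the heavy-tailed prior charges the low-rank neighbourhood cheaply. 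Substituting this estimate and the value $\tau^2 = (p+q)/(2q^2pn\|X\|_F^2)$ leaves a bound of the form $2\overline{R} + c_1 \mathcal{D}/\lambda + c_2\lambda/(nq)$ with $\mathcal{D} = r^*(p+q+2)\log(\cdots) + \log(1/\epsilon)$; balancing the two $\lambda$-dependent terms selects $\lambda \asymp \sqrt{nq/(p+q+2)}$, i.e. the stated $\lambda = 2\sqrt{nq/(p+q+2)}$, and produces the advertised $1/\sqrt{nq}$ rate carrying the $\sqrt{p+q+2}$ factor.

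I expect the genuinely delicate work to be exactly the part shared with Theorem \ref{th:theo}: constructing $\rho_0$ and bounding $\mathrm{KL}(\rho_0\|\pi)$ for the heavy-tailed spectral Student prior, which forces one to separate the $r^*$ ``signal'' singular values from the near-zero ones and to control the normalising constant of the truncated measure, as well as the hinge--to--$0$-$1$ comparison for the comparator. The content specific to this Proposition is lighter and mainly organisational: one must verify that dropping Assumption \ref{assume_margin} only removes the variance term, so that the Hoeffding contribution $\lambda\kappa/(nq)$ can no longer be absorbed and must instead be balanced against $\mathrm{KL}/\lambda$, which is precisely what shifts the optimal $\lambda$ from order $nq$ down to order $\sqrt{nq/(p+q+2)}$ and degrades the rate from $1/(nq)$ to $1/\sqrt{nq}$. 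A secondary technical care is the unboundedness of the hinge loss, handled as in the fast-rate proof by working on the support of the truncated comparator and using the $1$-Lipschitz dependence of the hinge loss on the margin.
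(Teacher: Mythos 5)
Your high-level plan --- replace the Bernstein/margin step by a Hoeffding-type bound, keep the translated-prior comparator and the Kullback--Leibler estimate from the proof of Theorem \ref{th:theo}, and re-balance $\lambda$ to get the $1/\sqrt{nq}$ rate --- is exactly what the paper does, and your explanation of why the optimal $\lambda$ drops to $\asymp\sqrt{nq/(p+q+2)}$ is correct. The paper's proof applies Hoeffding to the \emph{bounded} $0$--$1$ loss, obtaining $\int \mathbb{E}\exp\{\lambda[R(M)-\overline R]-\lambda[r(M)-\overline r]-\lambda^2/(8nq)\}\,\pi(dM)\le 1$ uniformly in $\rho$ via Donsker--Varadhan; the hinge loss enters only through the pointwise inequality $r\le r^h$ on the \emph{empirical} side, and $\int r^h\,d\tilde\rho_{M^B}$ is then controlled deterministically by the Lipschitz/translation argument ($\le r^h(M^B)+\|X\|_F^2\,qp\,\tau^2$), with $\overline r$ converted to $\overline R$ by the separate Lemma \ref{lem_theo1} (which is where the second $\epsilon$ comes from). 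Incidentally, the comparator is the plain translation $\tilde\rho_{M^B}(M)=\pi(M^B-M)$ with the KL bound quoted from Dalalyan's Lemma~2 --- there is no truncation to a spectral ball anywhere in the paper.

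The genuine gap in your route is the concentration step for the hinge loss. You need the deviation inequality $\int R^h\,d\rho\le\int r^h\,d\rho+\{\mathcal K(\rho\|\pi)+\log(1/\epsilon)\}/\lambda+\lambda\kappa/(nq)$ to hold \emph{uniformly in} $\rho$, because you must evaluate it at $\rho=\widehat\rho_\lambda$, which has full support on $\mathbb{R}^{p\times q}$. Through the Chernoff/Donsker--Varadhan machinery this requires $\int\mathbb{E}\exp\{\lambda(R^h(M)-r^h(M))-\lambda^2\kappa/(nq)\}\,\pi(dM)\le1$ with an $M$-independent constant $\kappa$. But the per-entry hinge loss has range of order $1+|(XM)_{ij}|$, so Hoeffding's lemma only yields a variance proxy growing like $\|XM\|^2$, and $\exp(c\lambda^2\|M\|^2/(nq))$ is not integrable against the spectral scaled Student prior, which has only polynomial tails. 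Truncating the \emph{comparator} does not repair this, since the problematic direction is the one applied to the posterior, not to $\rho_0$. This is precisely why the paper keeps the concentration on the $0$--$1$ loss throughout and invokes $r\le r^h$ only where the hinge risk is evaluated at the fixed comparator. So the part you describe as ``lighter and mainly organisational'' is in fact where your argument breaks; the fix is to re-route the concentration through the bounded loss as the paper does.
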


\subsection{Dealing with imcomplete responses}

As previously mentioned in the introduction, the method proposed in \cite{luo2018leveraging} has the capability to handle incomplete response data. Similarly, our proposed approach can also be easily and naturally extended to address the scenario where the response matrix $ Y $ contains missing data. The ability to handle missing data is an important consideration, as it is a common issue in many real-world datasets. This can be especially useful in cases where data collection is difficult or expensive, as it allows for the use of all available data, rather than discarding observations with missing data. 

Let $ \Omega = \{ (i,k): y_{ik} \,\, {\rm is \,\, observed } \, i \in \{ 1, \ldots, n\}, k\in \{1,\ldots, q \} \} $ be the index set of the observed entries of the binary response matrix $ Y $. Here, we have that $ |\Omega| =m < nq $. We assume that we observe a design matrix $X $ and $m $ i.i.d random pairs $ (\mathcal{O}_1, Y_1),
\ldots, (\mathcal{O}_m, Y_m) $ . The variables $ \mathcal{O}_i $ are i.i.d copies of a random variable $ \mathcal{O} $ having distribution on the set
$ \lbrace 1, \ldots, n \rbrace \times \lbrace 1, \ldots, p \rbrace $.

The risk in this case is given as
\begin{align*}
R(M) 
= 
\mathbb{E} \left[ \mathbbm{1}_{( Y_{1} \cdot (X M)_{\mathcal{O}_1} < 0)}\right],
\end{align*}
and its empirical counterpart is:
\begin{align*}
r_m(M) 
= 
\frac{1}{m}\sum_{i=1}^m
\mathbbm{1}_{(Y_{i} (XM)_{\mathcal{O}_i }<0)} .
\end{align*}
The hinge empirical risk is now as
\begin{align*}
r^h_m(M) 
= 
\frac{1}{m}\sum_{i=1}^m
(1 - Y_{i} (XM)_{\mathcal{O}_i })_{_+} \, .
\end{align*}

The subsequent theorem establishes a theoretical bound that links the integrated risk of the estimator to the minimum attainable risk achieved by the Bayes classifier, $ M^B $.

\begin{theorem}
	\label{thm_2}
	Assume that Assumption \ref{assume_margin} is satisfied and put $ r^* = {\rm rank} (M^B ) $. Then, 
	for any $\epsilon\in(0,1)$ and for $\lambda = 2 m/(3C + 2) $, $ \tau^2 =  (p+q)/(2qpm \|X\|_{F}^2 ) $, $ \upsilon \in(0,1) $,
	with probability at least $1-2\epsilon$,
	\begin{align*}
	\int R d\widehat{\rho}_\lambda 
	\leq 
	2.5\overline{R} + \Xi'_{C,\upsilon} 
	\frac{ r^* (q+p+2)	\log \left( 
		1+ \frac{q\| X\|_{_F}  \| M^B \|_{_F} \sqrt{mp} }{ \sqrt{(p+q) r^* }} \right) 
		+
		\log(1/\epsilon) }{m}
	\end{align*}
	where $ \Xi'_{C,\upsilon}$ is known constant that depends only on the  $\upsilon,C$.
\end{theorem}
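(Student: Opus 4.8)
The plan is to re-run the PAC-Bayesian argument that establishes Theorem \ref{th:theo}, now in the missing-data model where the $m$ observed pairs $(\mathcal{O}_i,Y_i)$ are genuinely i.i.d. This setting is in fact cleaner than the complete-data one, because $r_m$ and $r^h_m$ are honest averages of $m$ i.i.d. terms, so the concentration step applies with effective sample size $m$ in place of $nq$. The engine is the variational (Donsker--Varadhan) characterization recalled in Lemma \ref{lemma_donvara}, which guarantees that $\widehat{\rho}_\lambda$ minimizes $\rho \mapsto \lambda \int r^h_m \, d\rho + \mathrm{KL}(\rho\|\pi)$, so that for every competitor $\rho$ absolutely continuous with respect to $\pi$ one has $\lambda\int r^h_m\,d\widehat{\rho}_\lambda \le \lambda\int r^h_m\,d\rho + \mathrm{KL}(\rho\|\pi)$.

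\textbf{Main steps.} First I would pass from the $0$--$1$ risk to the hinge surrogate via $\mathbbm{1}_{a<0}\le(1-a)_+$, giving $R(M)\le R^h(M)$ for the expected hinge risk $R^h$; this is what lets the $0$--$1$ quantity $\int R\,d\widehat{\rho}_\lambda$ be controlled by the hinge-based objects the posterior actually sees. Second, I would establish the two exponential-moment (Bernstein/Bennett type) inequalities that sandwich the empirical process: one controlling the deviation of $\int R\,d\rho$ from the empirical hinge risk $\int r^h_m\,d\rho$ that the posterior minimizes, and one controlling the deviation of $\int r^h_m\,d\rho$ from its population version $\int R^h\,d\rho$ for the comparator. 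In both, Assumption \ref{assume_margin} bounds the second moment of the relevant $0$--$1$ increments by $C$ times the excess risk $R(M)-\overline{R}$, which is precisely what upgrades the slow $1/\sqrt{m}$ rate to the fast $1/m$ rate; integrating against $\pi$ and invoking the variational formula makes the bound hold uniformly over $\rho$ with probability $1-2\epsilon$, the two deviation events each costing $\epsilon$. Combining these yields a bound of the shape $\int R\,d\widehat{\rho}_\lambda \le a\,\overline{R} + b\inf_\rho\big[\,\text{(hinge excess of }\rho\text{)} + \mathrm{KL}(\rho\|\pi)/\lambda\,\big]$ with explicit $a,b$ depending on $C$ and the tuning $\upsilon$.

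\textbf{Closing the bound.} Third, I would take the comparator $\rho$ to be the spectral-scaled-Student prior localized around $M^B$ --- the same construction used for Theorem \ref{th:theo} --- and bound both the hinge excess of $\rho$ (by a Lipschitz/approximation argument controlling how much the hinge risk moves over the support of $\rho$, which is where $\|X\|_{_F}\|M^B\|_{_F}$ enters) and the divergence $\mathrm{KL}(\rho\|\pi)$. With the prescribed $\tau^2=(p+q)/(2qpm\|X\|_{_F}^2)$, the determinant/singular-value manipulation gives $\mathrm{KL}(\rho\|\pi)\lesssim r^*(p+q+2)\log\big(1+ q\|X\|_{_F}\|M^B\|_{_F}\sqrt{mp}/\sqrt{(p+q)r^*}\big)$, reproducing the complexity term with $m$ replacing $nq$. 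Finally, substituting $\lambda = 2m/(3C+2)$ balances the variance contribution against the KL penalty, fixing the leading constant $a=2.5$ and the overall constant $\Xi'_{C,\upsilon}$.

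\textbf{Main obstacle.} The delicate points are (i) the unboundedness of the hinge loss, so the exponential-moment step cannot be applied naively: one must exploit that on the effective support of $\pi$ and $\rho$ the magnitude $|(XM)_{\mathcal{O}}|$ is tied to $\|X\|_{_F}\|M\|_{_F}$, so that the increments are controlled while the margin assumption still furnishes a variance proportional to the excess $0$--$1$ risk; and (ii) the KL computation together with the localization of $\rho$, which is the technically heaviest part and the place where the $\log(1+\cdots)$ factor and the role of $\tau$ are pinned down. I expect step (ii) --- transcribing the KL bound so that the scaling $nq\mapsto m$ and the new value of $\tau^2$ stay consistent --- to be the crux, while the remainder is a faithful adaptation of the complete-data proof.
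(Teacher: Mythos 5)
Your overall skeleton (Donsker--Varadhan, margin assumption feeding a Bernstein bound to get the fast $1/m$ rate, translated spectral Student comparator, the KL bound with $\tau^2=(p+q)/(2qpm\|X\|_{_F}^2)$, and the final choice $\lambda=2m/(3C+2)$) matches the paper's route, which indeed just reruns the complete-data argument with $nq$ replaced by $m$. However, there is one step in your plan that would fail as you describe it, and you have correctly sensed the danger without resolving it: your second exponential-moment inequality, ``controlling the deviation of $\int r^h_m\,d\rho$ from its population version $\int R^h\,d\rho$.'' The hinge increments $(1-Y_i(XM)_{\mathcal{O}_i})_+$ have range growing with $\|XM\|$, and the PAC-Bayes step requires integrating the resulting exponential moment against the prior $\pi$; since $\pi$ is heavy-tailed (scaled Student on the singular values), that integral diverges, and there is no bounded ``effective support'' to restrict to. No truncation or localization of the kind you gesture at in your ``main obstacle'' paragraph is performed in the paper, and none is needed.

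The paper avoids the issue by never applying concentration to the hinge loss at all. Every exponential-moment bound is applied to bounded $0$--$1$ quantities: (a) the Bernstein/PAC-Bayes step is run on $U_i=\mathbbm{1}_{Y_i(XM)_{\mathcal{O}_i}\le 0}-\mathbbm{1}_{Y_i(XM^B)_{\mathcal{O}_i}\le 0}$, yielding a bound on the $0$--$1$ excess risk in terms of the \emph{empirical $0$--$1$} excess $\int r_m\,d\rho-\overline{r}_m$; (b) the hinge loss enters only through two deterministic inequalities, namely $r_m\le r^h_m$ pointwise and, for the translated prior $\tilde\rho_{M^B}$, the Lipschitz bound $\int r^h_m\,d\tilde\rho_{M^B}\le r^h_m(M^B)+\int\|X(M-M^B)\|_F^2\,d\tilde\rho_{M^B}\le 2\overline{r}_m+\|X\|_F^2qp\tau^2$; and (c) the only remaining random quantity, $\overline{r}_m$ (an average of bounded indicators at $M^B$), is concentrated around $\overline{R}$ by a separate elementary Chernoff bound (Lemma~\ref{lem_theo2}), which is the second $\epsilon$-event and the source of the parameter $\upsilon$. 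You should replace your second deviation inequality by this deterministic decomposition plus the identity $r^h_m(M^B)=2\overline{r}_m$ and the concentration of $\overline{r}_m$; otherwise the argument does not close.
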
 

\begin{remark}
The message of Theorem \ref{thm_2} lies in its ability to give a finite sample bound and to demonstrate that our estimate remains effective in a non-asymptotic scenario when the intrinsic dimension (i.e., the rank) is relatively small in relation to $m $. To clarify this point, consider the scenario where $p $ is a function of $m $ that increases as $m $ increases. In this scenario, a traditional asymptotic approach would not provide useful information, but our bounds still provide valuable insights, as long as the rank of the parameter matrix is sufficiently small in relation to $m $.	
\end{remark}

The selection of $ \lambda $ in our results is based on optimizing an upper bound on the risk $ R $ (as presented in the proofs of the theorems, given in Appendix \ref{sc_proof}). However, it is important to keep in mind that this choice may not always be the most suitable option in practice, even though it provides a reliable estimate of the magnitude of $ \lambda $. To ensure optimal performance, it is recommended to use cross-validation to adjust the temperature parameter correctly.

\section{Numerical studies}
\label{sc_numstudy}

\subsection{Implementation and compared methods}
\subsubsection*{Implementation}
In this section, we introduce the use of the Langevin Monte Carlo (LMC) algorithm as a method for sampling from the (pseudo) posterior. 
The LMC algorithm is a gradient-based method for sampling from a  distribution. 

First, a constant step-size unadjusted LMC algorithm, as described in \cite{durmus2019high}, is proposed. The algorithm starts with an initial matrix $M_0$ and uses the recursion:
\begin{equation}
\label{langevinMC}
M_{k+1} = M_{k} - h\nabla \log \widehat{\rho}_{\lambda}(M_k) +\sqrt{2h}N_k\qquad k=0,1,\ldots
\end{equation}
where $h>0$ is the step-size and $ N_0, N_1,\ldots$ are independent random matrices with i.i.d standard Gaussian entries. When the step-size $h$ is not small enough, the sum can explode \cite{roberts2002langevin}, so a Metropolis–Hastings (MH) correction is included in the algorithm. This guarantees convergence to the desired distribution, but slows down the algorithm due to the additional acception/rejection step at each iteration.

The update rule in \eqref{langevinMC} is now considered as a proposal for a new candidate,
\begin{align}
\tilde{M}_{k+1} = M_{k} - h\nabla \log \widehat{\rho}_{\lambda} (M_k) +\sqrt{2h},N_k,\qquad
k=0,1,\ldots,
\label{mala}
\end{align}
This proposal is accepted or rejected according to the MH algorithm with probability:
$$
\min \left\lbrace 1, \frac{ \widehat{\rho}_{\lambda} (\tilde{M}_{k+1}) q(M_k | \tilde{M}_{k+1}) }{
	\widehat{\rho}_{\lambda} (M_k ) q(\tilde{M}_{k+1} | M_k ) } \right\rbrace,
$$
where
$
q(x' | x) 
\propto 
\exp \left(- \| x'-x + h\nabla \log \widehat{\rho}_{\lambda} (x) \|^2_F / (4h) \right)
$
is the transition probability density from $x$ to $x'$. This is known as the Metropolis-adjusted Langevin algorithm (MALA). It guarantees the convergence to the (pseudo) posterior and provides a way to choose the step-size $h$. Unlike random-walk MH, MALA usually proposes moves into regions of higher probability, which are more likely to be accepted. The step-size $h$ for MALA is chosen such that the acceptance rate is approximate $0.5$ following \cite{roberts1998optimal}, while the step-size for LMC in the same setting is chosen smaller than for MALA.

\subsubsection*{Compared Methods}
We will assess the effectiveness of our proposed methods by comparing them to Bayesian approaches that rely on logistic regression. More specifically, it is now assumed that $ Y|X = 1 $ with probability $ f(XM) $ and $ Y|X = -1 $ with probability $ 1-f(XM) $, where $ f(\cdot) $ is the link function, $ f(x) = e^x/(1+e^x) $. In this case the pseudo-likelihood $ \exp(-\lambda r^\ell (M)) $, with $ \lambda = nq $, is exactly equal to the likelihood of the logistic model. Here, 
$$
r^\ell (M) = \sum_{ij} {\rm logit}(YXM_{ij})/(nq) ,
\quad \text{and} \quad
{\rm logit}(u) = \log(1+e^{-u} ) 
$$ 
is the logistic loss. The prior distribution is exactly the same as in previous sections.

As studied in \cite{zhang2004statistical}, the logistic loss can serve as a convex alternative to the hinge loss for approximating the 0-1 loss. However, it is worth noting that employing the logistic loss may lead to a slower convergence rate compared to the hinge loss \cite{zhang2004statistical}.

In this study, we evaluate the performance of our proposed methods, LMC-H and MALA-H, in comparison to three other alternatives: (1) LMC-logit, (2) MALA-logit (methods based on Bayesian logistic regression) and (3) the current state-of-the-art method  mRRR. The mRRR method, which was proposed in \cite{luo2018leveraging}, is a frequentist approach and its implementation can be found in the \texttt{R} package \texttt{rrpack} \cite{rrpack}.

\subsection{Simulation studies}

We consider different scenarios of data generation to assess the performance of our method. The sample size is fixed as $ n=100 $, while we vary the dimension of the covariates and responses as $ q=8,p=12 $ and $ q=20,p=50 $. The entries of the covariate matrix $ X $ are simulated from $ \mathcal{N} (0,1) $. 
More specifically, we consider two scenarios for the true parameter matrix $ M^* $: 
\begin{itemize}
	\item first, it is a rank-2 matrix that is a product of two rank-2 matrices, i.e $ M^* = A_{p\times 2}B_{q\times 2}^\top $, where $ A $'s and $ B $'s entries are iid drawn from $ \mathcal{N} (0,1) $;
	\item second, it is an approximate rank-2 matrix. To create an approximate rank-2 matrix, we first simulate a rank-2 matrix $ M' $ as before and then add some noise as $ M^* = 2M' + N $, where entries of $ N $ are simulated from $ \mathcal{N} (0,0.1) $. 
\end{itemize}
Then we consider the following settings to obtain the responses:
\begin{itemize}
	\item Setting I: 
	$$ Y = {\rm sign}( XM^* + E ) B. $$
	\item Setting II: with $ u = XM^* + E, $ put $ p= \exp (u)/ (1 +\exp (u) ) $: 
	$$ Y_{ij} \sim  \text{Binomial}(p_{ij}).
	$$	
\end{itemize}
Here, the noise term $ (E,B) $ is varied in different scenarios which lead to different setup in each setting. It is summarized in Table \ref{tb_type_noise}. 

The LMC, MALA are run with 30000 iterations and we take the first 1000 steps as burn-in. We set the values of tuning parameters $\lambda$ and $\tau$ to 1 for all scenarios. It is important to acknowledge that a better approach would be to tune these parameters using cross validation, which could lead to improved results. The mRRR method is run with default options and that 5-fold cross validation is used to select the rank.

Each simulation setting is repeated 100 times and we report the averaged results. The results of the simulations study are detailed in Table \ref{tb_rs_1}, Table \ref{tb_rs_1_cachieu} and Table \ref{tb_rs_2_lr}, Table \ref{tb_rs_2_lr_caochieu} and the values within parentheses indicate the standard deviations associated with each misclassification error percentage.

\begin{table}
	\centering
	\caption{Summary of simulation settings.}
	\begin{tabular}{ | lllll | }
		\toprule
		Setting & Name &  $B$ & $E $ &
		\\
		\hline
		I.1 & No noise 	& $B=1$ a.s. & $E=0 $ a.s. & 
		\\
		I.2 & With noise & $B=1 $ & $E\sim \mathcal{N}(0,1) $ a.s.  & 
		\\
		I.3 & Switch 	& $B\sim 0.9 \delta_1 + 0.1 \delta_{-1}$ & $E=0$ a.s.  & 
		\\
		I.4 & Switch with noise 	& $B\sim 0.9 \delta_1 + 0.1 \delta_{-1}$ & $E\sim \mathcal{N}(0,1) $ a.s.  & 
		\\
		II.1 & logistic 	& n.a. & $E=0 $ & 
		\\
		II.2 & logistic with noise & n.a. & $E\sim \mathcal{N}(0,1) $ & 
		\\
		\bottomrule
	\end{tabular}
	\label{tb_type_noise}
\end{table}

Among the methods mentioned, the MALA algorithm with the hinge loss consistently demonstrates the lowest misclassification error rate. In fact, in some instances, it even outperforms the frequentist mRRR method by a margin of two times. This highlights the effectiveness of the proposed method, which combines the MALA algorithm and the hinge loss, for achieving accurate classification results.

\begin{table}
	\centering
	\caption{Misclassification error. $ n=100, q =8, p =12 $, rank-2.}
	\begin{tabular}{ | llllll | }
		\hline \hline
		Setting & LMC-logit (\%) & LMC-H (\%) & MALA-logit (\%) & MALA-H (\%) & mRRR (\%)
		\\
		\hline
	 \multicolumn{6}{|c|}{fully observed.}
		\\ \hline
		I.1 & 0.51 (0.43) & 0.51 (0.44) & 0.39 (0.26)& 0.15 (0.14) & 0.51 (0.41)
		\\
		I.2 & 8.89 (2.71) & 8.89 (2.68) & 7.74 (2.50)& 6.87 (2.50) & 8.93 (2.70)
		\\
		I.3 & 5.36 (1.34) & 5.32 (1.28) & 6.13 (1.09)& 5.25 (1.01) & 5.26 (1.30)
		\\
		I.4 & 11.8 (2.84) & 11.7 (2.83)& 11.3 (2.44)& 10.6 (2.42) & 11.7 (2.85)
		\\
		II.1 & 14.4 (3.25) & 14.3 (3.34) & 12.6 (2.93)& 11.6 (2.76) & 14.4 (3.30)
		\\
		II.2 & 15.4 (3.98) & 15.4 (3.99)& 13.8 (3.64)& 12.8 (3.61) &  15.4 (3.98)
		\\
		\hline
\multicolumn{6}{|c|}{10\% of data is missing.}
		\\
		\hline
		I.1 & 2.09 (1.69) & 2.14 (1.70) & 2.47 (1.75) & 3.53 (1.94) & 2.10 (1.66)
\\
I.2 & 11.1 (4.74) & 10.9 (4.72) & 11.4 (5.11) & 11.4 (5.00) & 11.0 (4.84)
\\
I.3 & 16.1 (4.03) & 16.2 (4.08) & 17.5 (4.31) & 16.8 (4.14) & 16.0 (4.06)
\\
I.4 & 20.6 (5.24) & 20.6 (5.31) & 22.4 (5.46) & 22.2 (5.41) & 20.6 (5.14)
\\
II.1 & 16.6 (5.26) & 16.6 (5.05) & 17.4 (5.26) & 18.0 (5.39) & 16.4 (5.00)
\\
II.2 & 19.2 (5.76) & 19.2 (5.37) & 19.4 (5.32) & 20.0 (5.54) & 19.3 (5.54)
	\\
		\hline
\multicolumn{6}{|c|}{30\% of data is missing.}
		\\
		\hline
		I.1 & 2.84 (1.16) & 2.78 (1.17) & 3.07 (1.25) & 3.93 (2.80) & 2.80 (1.13)
\\
I.2 & 11.3 (3.03) & 11.2 (3.03) & 11.7 (3.07) & 11.9 (3.31) & 11.2 (3.11)
\\
I.3 & 17.3 (3.06) & 17.0 (2.92) & 18.0 (2.85) & 17.7 (2.90) & 16.9 (2.92)
\\
I.4 & 23.0 (4.21) & 23.1 (3.93) & 23.4 (3.92) & 23.3 (3.98) & 22.9 (4.00)
\\
II.1 & 17.6 (4.43) & 17.5 (4.25) & 18.0 (4.33) & 18.4 (4.31) & 17.5 (4.42)
\\
II.2 & 19.7 (4.34) & 19.8 (4.47) & 20.0 (4.39) & 20.6 (4.37) & 19.9 (4.58)
		\\
		\hline
		\hline
	\end{tabular}
	\label{tb_rs_1}
	\centering
	\caption{Misclassification error. $ n=100, q =20, p =50 $, rank-2.}
	\begin{tabular}{ | llllll | }
		\hline
		\hline
		Setting & LMC-logit (\%) & LMC-H (\%) & MALA-logit (\%) & MALA-H (\%) & mRRR (\%)
		\\
		\hline
		\multicolumn{6}{|c|}{Fully observed.}
		\\ \hline
		I.1 & 0.88 (0.63)& 0.88 (0.63)& 0.70 (0.46)& 0.46 (0.36)& 0.88 (0.63)
\\
I.2 & 4.35 (1.11)& 4.34 (1.11)& 3.57 (0.91)& 2.74 (0.69)& 4.35 (1.10)
\\
I.3 & 5.51 (0.69) & 5.52 (0.68) & 7.04 (0.73)& 5.92 (0.76)& 5.51 (0.70)
\\
I.4 & 7.82 (1.19)& 7.82 (1.21)& 8.85 (1.13)& 7.46 (1.06)& 7.83 (1.19)
\\
II.1 & 7.59 (1.78)& 7.58 (1.77)& 6.20 (1.47)& 4.86 (1.17)& 7.59 (1.77)
\\
II.2 & 8.01 (1.78)& 7.99 (1.79)& 6.57 (1.46)& 5.15 (1.20)& 7.99 (1.78)
		\\
		\hline
		\multicolumn{6}{|c|}{10\% of data is missing.}
		\\
		\hline
		I.1 & 3.32 (1.34) & 3.32 (1.35) & 3.46 (1.45) & 3.65 (1.57) & 3.34 (1.36)
\\
I.2 & 7.13 (2.15) & 7.15 (2.17) & 7.53 (2.12) & 7.69 (2.32) & 7.20 (2.20)
\\
I.3 & 16.7 (2.75) & 16.7 (2.84) & 19.1 (3.00) & 19.7 (2.97) & 16.7 (2.79)
\\
I.4 & 18.3 (2.90) & 18.4 (2.89) & 21.0 (3.17) & 21.8 (3.32) & 18.2 (2.85)
\\
II.1 & 10.6 (2.45) & 10.6 (2.47) & 11.2 (2.41) & 11.2 (2.63) & 10.6 (2.45) 
\\
II.2 & 11.7 (2.57) & 11.6 (2.54) & 12.1 (2.55) & 12.2 (2.59) & 11.6 (2.50)
		\\
		\hline
		\multicolumn{6}{|c|}{30\% of data is missing.}
		\\
		\hline
		I.1 & 3.82 (0.92) & 3.81 (0.94) & 4.04 (1.01) & 3.99 (0.98) & 3.81 (0.93)
\\
I.2 & 7.98 (2.52) & 7.98 (2.53) & 8.18 (2.24) & 8.27 (1.98) & 7.97 (2.50)
\\
I.3 & 19.9 (3.94) & 19.9 (3.92) & 21.2 (2.60) & 21.4 (2.17) & 19.9 (3.93)
\\
I.4 & 22.3 (4.51) & 22.3 (4.52) & 23.1 (2.79) & 23.3 (2.51) & 22.3 (4.56)
\\
II.1 & 11.2 (2.63) & 11.2 (2.70) & 11.7 (2.40) & 11.8 (2.39) & 11.2 (2.74) 
\\
II.2 & 11.6 (2.24) & 11.6 (2.25) & 12.3 (2.36) & 12.6 (2.50) & 11.6 (2.26)
		\\
		\hline
		\hline
	\end{tabular}
	\label{tb_rs_1_cachieu}
\end{table}

\begin{table}
	\centering
	\caption{Misclassification error. $ n=100, q =8, p =12 $, approximate rank-2.}
	\begin{tabular}{ | llllll | }
		\hline\hline
		Setting & LMC-logit (\%) & LMC-H (\%) & MALA-logit (\%) & MALA-H (\%) & mRRR (\%)
		\\
		\hline
		\multicolumn{6}{|c|}{Fully observed.}
		\\ 
		\hline
		I.1 & 1.37 (2.57) & 1.36 (2.58) & 1.15 (0.85)& 0.59 (0.52) & 1.38 (2.61)
		\\
		I.2 & 7.76 (3.24) & 7.72 (3.25) & 3.90 (0.96)& 2.88 (0.87) & 7.78 (3.27)
		\\
		I.3 & 11.0 (2.25) & 11.1 (2.31) & 7.40 (0.94)& 6.14 (0.93) & 11.2 (2.39)
		\\
		I.4 & 12.5 (3.08) & 12.2 (2.91)& 8.27 (1.18)& 7.06 (1.20) & 12.4 (3.04)
		\\
		II.1 & 9.62 (2.45) & 9.59 (2.43) & 5.94 (1.51)& 4.87 (1.48) & 9.65 (2.48)
		\\
		II.2 & 10.9 (2.58) & 10.8 (2.50)& 7.04 (1.50)& 5.98 (1.49) &  10.9 (2.55)
		\\
\hline
		\multicolumn{6}{|c|}{10\% of data is missing.}
\\
\hline
		I.1 & 7.54 (3.57) & 7.49 (3.68) & 6.36 (2.67) & 5.94 (2.64) & 7.56 (3.57)
		\\
		I.2 & 12.2 (4.38) & 12.1 (4.28) & 8.66 (3.08)  & 8.24 (3.05) & 12.3 (4.49)
		\\
		I.3 & 22.3 (4.89) & 22.1 (4.93) & 19.0 (4.72) & 18.4 (4.42) & 22.4 (5.03)
		\\
		I.4 & 22.8 (5.72) & 22.8 (5.84) & 19.9 (4.34) & 19.2 (4.41) & 23.1 (5.85)
		\\
		II.1 & 14.7 (4.45) & 14.6 (4.54) & 11.3 (3.74) & 11.0 (3.91) & 14.8 (4.59)
		\\
		II.2 & 15.4 (4.78) & 15.5 (4.88) & 11.8 (3.96) & 11.9 (4.13) & 15.6 (5.02)
		\\
		\hline
		\multicolumn{6}{|c|}{30\% of data is missing.}
		\\
		\hline
		I.1 & 11.4 (5.32) & 11.4 (5.12) & 7.97 (2.03) & 7.22 (2.08) & 11.5 (5.31)
		\\
		I.2 & 14.2 (3.75) & 14.1 (3.79) & 9.62 (2.11) & 8.98 (2.16) & 14.3 (4.02)
		\\
		I.3 & 24.3 (4.09) & 24.2 (4.17) & 20.5 (2.81) & 19.9 (2.93) & 24.6 (4.36)
		\\
		I.4 & 25.9 (4.37) & 25.7 (4.15) & 21.7 (2.74) & 21.2 (2.89) & 26.1 (4.40) 
		\\
		II.1 & 15.9 (3.95) & 15.7 (3.83) & 11.9 (2.95) & 11.5 (2.80) & 16.0 (4.11)
		\\
		II.2 & 16.3 (4.22) & 16.3 (4.07) & 12.4 (2.69) & 12.0 (2.75) & 16.3 (4.13) 
		\\
		\hline
		\hline
	\end{tabular}
	\label{tb_rs_2_lr}

	\caption{Misclassification error. $ n=100, q =20, p =50 $, approximate rank-2.}
	\begin{tabular}{ | llllll | }
		\hline
		\hline
		Setting & LMC-logit (\%) & LMC-H (\%) & MALA-logit (\%) & MALA-H (\%) & mRRR (\%)
		\\
		\hline
		\multicolumn{6}{|c|}{Fully observed data. }
\\
\hline
I.1 & 13.3 (2.27)& 13.2 (2.27) & 8.61 (1.25)& 5.77 (0.68)& 13.2 (2.26)
\\
I.2 & 13.3 (2.04) & 13.3 (2.04) & 8.63 (1.07)& 5.85 (0.76)& 13.3 (2.05)
\\
I.3 & 15.7 (2.92) & 15.7 (2.93) & 12.7 (1.62)& 10.0 (1.22)& 15.7 (2.91)
\\
I.4 & 16.0 (3.12)& 16.1 (3.12)& 13.0 (1.57)& 10.3 (1.12)& 16.1 (3.13)
\\
II.1 & 13.4 (2.56)& 13.3 (2.55)& 8.73 (1.29)& 6.02 (0.82)& 13.4 (2.55)
\\
II.2 & 13.6 (2.14)& 13.6 (2.13)& 8.88 (1.24)& 6.05 (0.75)& 13.6 (2.15)
		\\
		\hline
		\multicolumn{6}{|c|}{10\% of data is missing.}
		\\
		\hline
		I.1 & 16.1 (3.04) & 16.1 (3.08) & 14.9 (3.04) & 14.2 (2.54) & 16.1 (3.05)
		\\
		I.2 & 16.5 (3.23) & 16.5 (3.17) & 15.1 (2.80) & 14.4 (2.75) & 16.5 (3.18)
		\\
		I.3 & 25.6 (4.09) & 25.6 (3.91) & 25.6 (3.25) & 25.4 (3.17) & 25.6 (3.97)
		\\
		I.4 & 25.8 (4.49) & 25.7 (451) & 25.7 (3.73) & 25.4 (3.66) & 25.7 (4.46)
		\\
		II.1 & 16.3 (3.38) & 16.4 (3.29) & 14.8 (2.73) & 14.0 (2.53) & 16.3 (3.30)
\\
II.2 & 17.0 (3.62) & 16.9 (3.57) & 15.1 (3.30) & 14.8 (3.15) & 16.9 (3.53)
		\\
		\hline
		\multicolumn{6}{|c|}{30\% of data is missing.}
		\\
		\hline
		I.1 & 17.0 (2.77) & 17.0 (2.74) & 15.8 (2.35) & 15.2 (2.10) & 17.0 (2.74)
		\\
		I.2 & 16.8 (2.57) & 16.8 (2.50) & 15.8 (2.27) & 15.2 (2.00) & 16.7 (2.58)
		\\
		I.3 & 29.7 (4.40) & 29.8 (4.43) & 28.3 (2.78) & 27.6 (2.46) & 29.8 (4.40)
		\\
		I.4 & 29.9 (4.30) & 29.9 (4.24) & 28.3 (2.71) & 27.4 (2.42) & 29.9 (4.29)
		\\
		II.1 & 17.4 (3.12) & 17.4 (3.10) & 16.4 (2.34) & 15.8 (2.13) & 17.4 (3.07)
		\\
		II.2 & 17.7 (3.20) & 17.7 (3.23) & 16.7 (2.94) & 16.1 (2.48) & 17.7 (3.25)
		\\
		\hline
		\hline
	\end{tabular}
	\label{tb_rs_2_lr_caochieu}
\end{table}

\newpage
The other methods that employ the logistic loss function, such as MALA-logit and the LMC algorithm-based approaches (LMC-H and LMC-logit), exhibit similar performance to the mRRR method. These methods are generally comparable in terms of misclassification error rates. However, it is worth noting that the MALA-logit approach, which also utilizes the MALA algorithm, shows superiority to the mRRR method in cases involving low-rank matrices and higher dimensions.

While the LMC-based methods (LMC-H and LMC-logit) perform well, they are not significantly better than the mRRR method. However, these approaches are noted to be more efficient in handling larger data sets, which can be advantageous in certain scenarios.

It's important to acknowledge that as the proportion of missing data increases to 10\% and 30\%, the misclassification error percentages also increase. This suggests a decline in the performance of these methods in the presence of missing data. Therefore, addressing and mitigating the effects of missing data is crucial for improving the performance of these methods in real-world applications.

In summary, the MALA algorithm with the hinge loss stands out as the method with consistently lower misclassification error rates. The logistic loss-based methods and the LMC-based methods demonstrate comparable performance to the mRRR method. However, it is necessary to carefully consider the impact of missing data on these methods' performance and employ appropriate strategies to handle missing data effectively.

\subsection{A real data study}

In this section, we evaluate the performance of our proposed method on a real data with multiple binary responses. To do this, we utilize the \texttt{spider} data set, which can be found in the \texttt{R} package \texttt{mvabund} \cite{wang2012mvabund}. The matrix of covariates, $ X \in \mathbb{R}^{28\times 6} $, includes information on 6 environmental features from 28 samples. The response matrix, $ Y \in \mathbb{R}^{28\times 12} $, is count data that represents the number of 12 hunting spider species from 28 observations of abundance. We convert the response data into binary format by setting $ y_{ij} = -1 $ if there is no such species surviving in a certain environment and $ y_{ij} = 1 $ otherwise. This results in a total of 154 negative ones (45.8\%) and 182 positive ones (54.2\%).

The data is divided randomly into two sets: a training set consisting of 23 samples and a test set consisting of 5 samples. We use the training data to run the methods and then evaluate their prediction accuracy based on the test data. This process is repeated 100 times, each time with a different random partition of the training and test data. The results of this procedure are illustrated in Figure \ref{fg_realdatabbrrr}. By repeating the procedure multiple times and averaging the results, we can obtain a more robust and accurate assessment of the performance of the methods. This approach allows us to account for any potential variability in the data and obtain a better understanding of the methods' performance.

\begin{figure}[!ht]
	\centering
	\includegraphics[scale=.4]{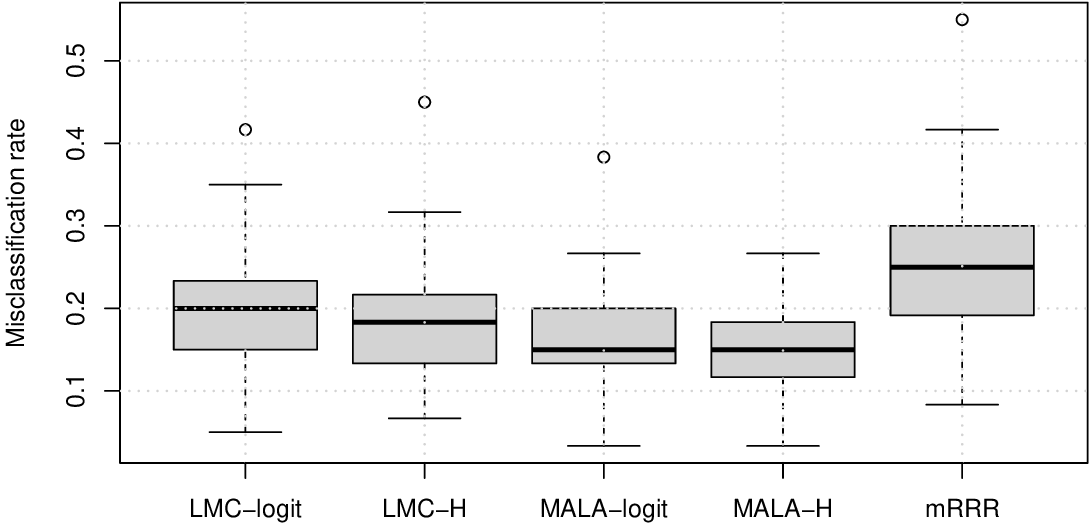}
	\caption{Result on real data.}
	\label{fg_realdatabbrrr}
\end{figure}

The results, from Figure \ref{fg_realdatabbrrr}, show that our proposed method, computed using the Metropolis-Adjusted Langevin Algorithm with the hinge loss ('MALA-H'), outperforms the frequentist mRRR method. The prediction errors for MALA-H and mRRR are 15.05\% ($ \pm $5.15\%) and 24.96\% ($ \pm $7.85\%), respectively. Other approaches, such as those using LMC, also perform well and are slightly better than the mRRR method. The approach MALA-logit, which utilizes the logit loss, is slightly behind MALA-H with a prediction error of 16.47\% ($ \pm $5.63\%). This suggests that MALA-H is the most effective method among those tested.

To further evaluate the performance of all the considered methods, we also investigate their performance when missing data is present in the response matrix of the \texttt{spider} real data set. To do this, we randomly remove 10\%, 20\%, and 30\% of the entries in the response matrix. We denote by $ \Omega $ the index set of the observed entries. We then run all the considered methods on the data set with incomplete responses and evaluate the prediction/misclassification error on the set of unobserved entries and this process is repeated 100 times. This allows us to assess how well the methods can handle missing data and make predictions even when certain information is missing. The results of this evaluation are presented in Figure \ref{fg_real_missingdata}. In general, when examining incomplete response scenarios, the outcomes are comparable to those observed in complete response cases, implying that all the methods under consideration are capable of handling missing data. Furthermore, it is worth noting that MALA-H continues to outperform other methods in dealing with incomplete response situations.

\begin{figure}[!ht]
	\centering
	\includegraphics[scale=.5]{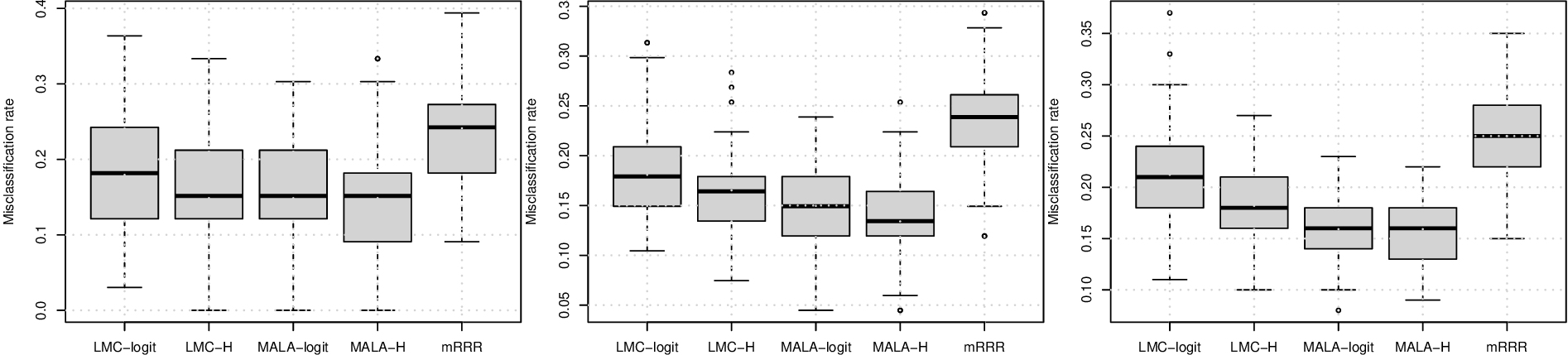}
	\caption{Results on real data with missing data in the response matrix. Left: 10\% of the entries is removed, middle: 20\% of the entries is removed, right: 30\% of the entries is removed.}
	\label{fg_real_missingdata}
\end{figure}

\section{Discussion and conclusion}
\label{sc_conclus}
In this paper, we investigated the problem of making predictions for multivariate binary responses using a set of covariates by exploring low-rank predictors through the lens of machine learning. We focused on providing the prediction error rate, which has not been addressed in previous published works. Our approach leverages methods from statistical learning theory for binary classification and does not require any assumptions about the underlying observations. Instead, we focus on a set of predictors and aim to find the one that results in the lowest prediction error. We propose using a pseudo-Bayesian method in this paper, which is also able to handle incomplete response data. 

Furthermore, we developed an efficient computational approximation method, based on a gradient-based sampling technique known as Langevin Monte Carlo. By implementing this method, we were able to overcome the computational challenges that are often associated with this type of probabilistic approach, making our proposed approach more practical and applicable in real-world settings. The numerical studies we conducted on simulated and real-world data sets have shown promising results when compared to the state-of-the-art method, further validating the effectiveness of our proposed approach. 

Although, our work offers a promising solution for the problem of relating a set covariates to multiple binary response, but there is still room for further exploration and improvement. One area of future research could include extending our method to incorporate variable selection, in order to identify the most important covariates in determining the binary responses. Additionally, future research could also focus on addressing the problem of missing data in the covariate matrix $ X $, which is a common issue in many real-world datasets. This would further improve the robustness and applicability of our proposed method. 

An additional aspect that requires further attention in practical applications is the tuning of the learning rate $ \lambda $ and the parameter $ \tau $ in the prior distribution. While we have presented certain values that yield favorable theoretical outcomes, it is important to acknowledge that these choices may not be optimal in practice, although they offer some guidance regarding the magnitude of the tuning parameters. We have also mentioned that cross-validation can be employed in practical scenarios, albeit at the cost of increased computational time. It is worth highlighting that the optimal tuning of these parameters remains a challenging problem in practical settings. In particular, tuning the learning rate $ \lambda $ remains as an open research question that has garnered significant attention within the framework of generalized Bayesian inference, see for example \cite{meunier2021meta,wu2023comparison} and references there in.

Furthermore, when dealing with practical problems involving huge datasets, it has been observed that LMC algorithms may encounter scalability issues. In order to address this challenge, variational inference (VI) has emerged as a computational optimization-based alternative to Markov chain Monte Carlo techniques. VI has gained popularity for approximating intractable posterior distributions in large-scale Bayesian models due to its comparable effectiveness and superior computational efficiency. The connection between the PAC-Bayesian approach and variational inference has been elucidated in \cite{AlquierRidgway2015}, while the development of variational inference for matrix completion has been explored in \cite{cottet20181}. These references provide a roadmap for future research, aiming to develop a more scalable computational approximation method for our proposed approach.

\subsection*{Acknowledgments}
TTM is supported by the Norwegian Research Council, grant number 309960 through the Centre for Geophysical Forecasting at NTNU. The R codes and data used in the numerical experiments are available at:  \url{https://github.com/tienmt/binary_rrr} .

\subsection*{Conflicts of interest/Competing interests}
The author declares no potential conflict of interests.

\appendix
\section{Proofs}
\label{sc_proof}

For any $\Theta\subset \mathbb{R}^{n_1 \times n_2}$,  let $\mathcal{P}(\Theta)$ denote the set of all probability distributions on $\Theta$ equipped with the Borel $\sigma$-algebra. 
For $(\mu,\nu)\in \mathcal{P}(\Theta)^2$, $\mathcal{K}(\nu,\mu) $ denotes the Kullback-Leibler divergence.
\begin{lemma}[Donsker and Varadhan's inequality, see \cite{catonibook}{Lemma 1.1.3}]
	\label{lemma_donvara}
	
	Let $\mu \in\mathcal{P}(\Theta)$. For any measurable, bounded function $h:\Theta\rightarrow\mathbb{R}$ we have:
	\begin{equation*}
	\log \int {\rm e}^{h(\theta)} \mu({\rm d}\theta) = 
	\sup_{\rho\in\mathcal{P}(\Theta)}\left[\int h(\theta) \rho({\rm d}\theta) 
	-
	\mathcal{K} (\rho , \mu)\right].
	\end{equation*}
	Moreover, the supremum w.r.t $\rho$ in the right-hand side is
	reached for the Gibbs distribution,
$
\rho (d\theta)
\propto 
\exp(h(\theta)) \pi (d\theta).
$
\end{lemma}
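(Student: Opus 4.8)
The plan is to exhibit the maximizer explicitly and then reduce the claimed equality to the nonnegativity of a Kullback--Leibler divergence. Write $Z = \int {\rm e}^{h(\theta)}\mu({\rm d}\theta)$, which is finite and strictly positive precisely because $h$ is bounded; the left-hand side of the identity is then exactly $\log Z$. I would define the candidate maximizer to be the Gibbs measure $\rho^\ast$ given by the Radon--Nikodym derivative $\mathrm{d}\rho^\ast/\mathrm{d}\mu = {\rm e}^{h}/Z$ with respect to $\mu$; since this density is bounded above and below by strictly positive constants, $\rho^\ast$ and $\mu$ are mutually absolutely continuous.

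First I would dispose of the measures $\rho$ that fail to be absolutely continuous with respect to $\mu$: for such $\rho$ one has $\mathcal{K}(\rho,\mu)=+\infty$, so the objective $\int h\,\mathrm{d}\rho-\mathcal{K}(\rho,\mu)$ equals $-\infty$ and these measures never contribute to the supremum. Hence it suffices to restrict attention to $\rho\ll\mu$, for which $\mathcal{K}(\rho,\mu)=\int \log(\mathrm{d}\rho/\mathrm{d}\mu)\,\mathrm{d}\rho$ is well defined.

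The key algebraic step is a change-of-measure identity. For any $\rho\ll\mu$ (equivalently $\rho\ll\rho^\ast$), the chain rule gives $\log(\mathrm{d}\rho/\mathrm{d}\rho^\ast)=\log(\mathrm{d}\rho/\mathrm{d}\mu)-(h-\log Z)$, so that
\[
\mathcal{K}(\rho,\rho^\ast)=\int\log\frac{\mathrm{d}\rho}{\mathrm{d}\rho^\ast}\,\mathrm{d}\rho=\mathcal{K}(\rho,\mu)-\int h\,\mathrm{d}\rho+\log Z.
\]
Rearranging yields the exact identity $\int h\,\mathrm{d}\rho-\mathcal{K}(\rho,\mu)=\log Z-\mathcal{K}(\rho,\rho^\ast)$, valid for every $\rho\ll\mu$. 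Finally I would invoke Gibbs' inequality, namely $\mathcal{K}(\rho,\rho^\ast)\ge 0$ with equality if and only if $\rho=\rho^\ast$. This immediately gives $\int h\,\mathrm{d}\rho-\mathcal{K}(\rho,\mu)\le \log Z$ for all $\rho$, with equality attained exactly at $\rho=\rho^\ast$, proving both the variational equality and that the supremum is reached at the Gibbs distribution.

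As for the main obstacle: there is no genuine analytic difficulty once the Gibbs measure is guessed, since the argument is essentially a rewriting. The only points demanding care are measure-theoretic: justifying $Z\in(0,\infty)$ and the finiteness of $\int h\,\mathrm{d}\rho$ (both secured by the boundedness of $h$), correctly handling the singular case via $\mathcal{K}(\rho,\mu)=+\infty$, and applying the equality condition in Gibbs' inequality so that the maximizer is identified uniquely.
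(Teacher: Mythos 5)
Your proof is correct. Note that the paper itself does not prove this lemma at all --- it is quoted verbatim from Catoni's monograph (Lemma 1.1.3 there) and used as a black box --- so there is no in-paper argument to compare against; your change-of-measure derivation, reducing the identity to $\int h\,\mathrm{d}\rho-\mathcal{K}(\rho,\mu)=\log Z-\mathcal{K}(\rho,\rho^\ast)$ and invoking nonnegativity of the Kullback--Leibler divergence, is the standard proof and is essentially the one given in the cited reference. The measure-theoretic points you flag (positivity and finiteness of $Z$, mutual absolute continuity of $\rho^\ast$ and $\mu$ from the boundedness of $h$, and discarding singular $\rho$ via $\mathcal{K}(\rho,\mu)=+\infty$) are exactly the ones that need care, and you handle them correctly; as a minor remark, the lemma as printed in the paper writes the Gibbs maximizer as proportional to $\exp(h(\theta))\pi(\mathrm{d}\theta)$ with $\pi$ where $\mu$ is meant, and your statement with $\mu$ is the correct one.
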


We will make use of the following version of the Bernstein's lemma taken from \cite[page 24]{MR2319879}.
\begin{lemma}
	\label{lemmemassart} Let $U_{1}$, \ldots, $U_{n}$ be independent real
	valued random variables. Let us assume that there are two constants
	$v$ and $w$ such that
	$
	\sum_{i=1}^{n} \mathbb{E}[U_{i}^{2}] \leq v 
	$
	and that for all integers $k\geq 3$,
	$
	\sum_{i=1}^{n} \mathbb{E}\left[(U_{i})_+^{k}\right] \leq v k!w^{k-2}/2. 
	$
	\\
	Then, for any $ \zeta \in (0,1/w)$,
	$ \mathbb{E}
	\exp\left[\zeta \sum_{i=1}^{n}\left[U_{i}-\mathbb{E}U_{i}\right]
	\right]
	\leq 
	\exp\left(\frac{v\zeta^{2}}{2(1-w\zeta )} \right) .$
\end{lemma}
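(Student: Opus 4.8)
The plan is to control the logarithm of the moment generating function (MGF) of the centered sum: first use independence to factorize it, then reduce each factor to an expectation of $\phi(x) := e^x - 1 - x$, and finally dominate $\phi$ by the prescribed moments through a single pointwise inequality. The convergence of everything (and finiteness of the MGF) for $\zeta\in(0,1/w)$ will itself fall out of the moment hypotheses.

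First I would use independence to write $\mathbb{E}\exp[\zeta\sum_{i}(U_i - \mathbb{E}U_i)] = \prod_{i} e^{-\zeta\mathbb{E}U_i}\,\mathbb{E}e^{\zeta U_i}$. Taking logarithms and applying $\log t \leq t-1$ to each factor $\mathbb{E}e^{\zeta U_i}$, the linear correction $-\zeta\mathbb{E}U_i$ exactly cancels the first-order part, leaving
\begin{equation*}
\log\mathbb{E}\exp\Big[\zeta\sum_{i}(U_i-\mathbb{E}U_i)\Big] \leq \sum_{i}\mathbb{E}\big[e^{\zeta U_i}-1-\zeta U_i\big] = \sum_{i}\mathbb{E}[\phi(\zeta U_i)].
\end{equation*}
This is the standard Bernstein reduction, and it is attractive here because no expansion around the centered variable is needed.

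The crux is the pointwise inequality, valid for every real $u$ and every $\zeta>0$,
\begin{equation*}
\phi(\zeta u) = e^{\zeta u}-1-\zeta u \leq \frac{\zeta^2 u^2}{2} + \sum_{k\geq 3}\frac{\zeta^k}{k!}(u)_+^{k}.
\end{equation*}
For $u\geq 0$ this is an equality, being the Taylor series of $\phi$ with $(u)_+=u$. For $u<0$ the right-hand side retains only $\zeta^2u^2/2$, so I would verify $h(x):=e^{x}-1-x-x^2/2\leq 0$ for $x=\zeta u<0$; this holds since $h(0)=0$, $h'(0)=0$, and $h''(x)=e^{x}-1<0$ on $(-\infty,0)$, which forces $h$ to be increasing on $(-\infty,0)$ and hence $h\leq 0$ there. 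This one-sided check is exactly what makes the \emph{positive-part} moments sufficient, and I expect it to be the only delicate point: because the hypothesis controls $(U_i)_+^{k}$ rather than $U_i^{k}$, one cannot dominate $u^{k}$ by $(u)_+^{k}$ termwise for even $k$, and the convexity argument above is what repairs the even-power terms.

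Finally I would take expectations (the interchange with the infinite sum being legitimate by Tonelli, as the surviving summands are nonnegative) and sum over $i$, then insert the two hypotheses: the quadratic term gives $\tfrac{\zeta^2}{2}\sum_i\mathbb{E}[U_i^2]\leq \tfrac{v\zeta^2}{2}$, while $\sum_i\mathbb{E}[(U_i)_+^{k}]\leq v k!w^{k-2}/2$ yields $\sum_{k\geq 3}\tfrac{\zeta^{k}}{k!}\cdot\tfrac{vk!w^{k-2}}{2}=\tfrac{v\zeta^2}{2}\sum_{j\geq 1}(\zeta w)^{j}$. Because $\zeta\in(0,1/w)$ the geometric series converges, and combining it with the quadratic term collapses to $\tfrac{v\zeta^2}{2}\cdot\tfrac{1}{1-\zeta w}$. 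Exponentiating the resulting bound on the log-MGF gives $\exp\!\big(v\zeta^2/(2(1-w\zeta))\big)$, as claimed.
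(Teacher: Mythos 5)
Your proof is correct. Note that the paper does not actually prove this lemma---it quotes it verbatim from Massart \cite{MR2319879}---and your argument is exactly the standard one behind that reference: factorize the MGF by independence, use $\log t \le t-1$ so that the centering term cancels and only $\sum_i \mathbb{E}[\phi(\zeta U_i)]$ with $\phi(x)=e^x-1-x$ survives, then dominate $\phi(\zeta u)$ by the full Taylor series when $u\ge 0$ and by $\zeta^2u^2/2$ alone when $u<0$ (your monotonicity check of $h(x)=e^x-1-x-x^2/2$ on $(-\infty,0)$ is the right way to handle the even powers, which is precisely why positive-part moments suffice), after which Tonelli, the two moment hypotheses, and the geometric series $\sum_{j\ge 0}(\zeta w)^j=1/(1-\zeta w)$ for $\zeta\in(0,1/w)$ yield the claimed bound.
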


Firstly, we establish a general PAC-Bayesian bound for our problem as a preliminary step. Subsequently, the specific case necessary for obtaining the result stated in Theorem \ref{th:theo} will be examined.
\begin{lemma}
	\label{le:theo2}
	Assume that Assumption \ref{assume_margin} is satisfied and that $\lambda<2nq/(C+2) $.
	Then, for $\epsilon \in (0,1) $, with probability at least $1-\epsilon $:
	\begin{align}
	\int R d\widehat{\rho}_\lambda 
	\leq 
	\overline{R} 
	+ 
	\frac{1}{1-\frac{C\lambda }{2nq(1-\lambda/nq)}} 
	\left\lbrace 
	\inf_{
		\rho \in \mathcal{P}(\mathbb{R}^{p\times q}) 
	}
	\left[ 
	r^h d\rho + \frac{\mathcal{K}(\rho,\pi)}{\lambda} 	\right] 
	-
	 \overline{r} + \frac{\log( 1/\epsilon)}{\lambda} 
	 \right\rbrace
	.
	\end{align}
\end{lemma}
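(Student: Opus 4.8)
The plan is to run the standard Catoni-type PAC-Bayesian argument, with the one twist that the pseudo-posterior is built from the hinge risk while the quantity we ultimately control is the $0$--$1$ risk. The backbone of the proof is a single exponential-moment inequality, obtained from Massart's version of Bernstein's inequality (Lemma \ref{lemmemassart}), which is then ``de-randomized'' through the Donsker--Varadhan identity (Lemma \ref{lemma_donvara}).

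First I would fix a predictor $M$ and apply Lemma \ref{lemmemassart} to the centered per-entry $0$--$1$ loss differences $U_\ell = \mathbb{E}[h_\ell] - h_\ell$, where $h_\ell = \mathbbm{1}_{Y_\ell (XM)_\ell \le 0} - \mathbbm{1}_{Y_\ell (XM^B)_\ell \le 0}$ ranges over the $nq$ entries. These variables are bounded (so the tail condition of Lemma \ref{lemmemassart} holds with $w=1$), while the margin Assumption \ref{assume_margin} controls the variance via $\sum_\ell \mathbb{E}[U_\ell^2] \le nq\,C\,(R(M)-\overline{R}) =: v$. Choosing $\zeta = \lambda/(nq) \in (0,1)$ (legitimate since $\lambda < 2nq/(C+2) < nq$) turns $\exp(\tfrac{v\zeta^2}{2(1-w\zeta)})$ into exactly the denominator $1-\lambda/nq$ and yields
\[
\mathbb{E}\exp\!\Big[\lambda\big((R(M)-\overline{R})-(r(M)-\overline{r})\big) - \tfrac{C\lambda^2 (R(M)-\overline{R})}{2nq(1-\lambda/nq)}\Big]\le 1 .
\]
This step, and in particular matching the variance factor to the margin constant $C$, is where Assumption \ref{assume_margin} is genuinely used; I expect it to be the main obstacle, since getting the constants to line up with the stated denominator $1-\tfrac{C\lambda}{2nq(1-\lambda/nq)}$ requires the exact bookkeeping above.

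Next I would integrate this bound against the prior $\pi(\mathrm{d}M)$, exchange expectation and integral by Fubini, and apply Markov's inequality, so that with probability at least $1-\epsilon$ the log-Laplace transform $\log\int \exp[\cdots]\,\pi(\mathrm{d}M)$ is at most $\log(1/\epsilon)$. The variational identity in Lemma \ref{lemma_donvara} then linearizes this in $\rho$: for every $\rho \in \mathcal{P}(\mathbb{R}^{p\times q})$ simultaneously,
\[
\lambda\Big(1-\tfrac{C\lambda}{2nq(1-\lambda/nq)}\Big)\!\int (R-\overline{R})\,\mathrm{d}\rho \;\le\; \lambda\!\int(r-\overline{r})\,\mathrm{d}\rho + \mathcal{K}(\rho,\pi)+\log(1/\epsilon).
\]
The hypothesis $\lambda < 2nq/(C+2)$ is precisely what keeps the bracket on the left strictly positive, allowing division by $\lambda\big(1-\tfrac{C\lambda}{2nq(1-\lambda/nq)}\big)$.

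Finally I would specialize to $\rho=\widehat{\rho}_\lambda$, invoke the pointwise domination $r(M)\le r^h(M)$ of the $0$--$1$ loss by the hinge loss to replace $\int r\,\mathrm{d}\widehat{\rho}_\lambda$ with $\int r^h\,\mathrm{d}\widehat{\rho}_\lambda$, and then use the second part of Lemma \ref{lemma_donvara} with $h=-\lambda r^h$: since $\widehat{\rho}_\lambda\propto e^{-\lambda r^h}\pi$ is exactly the Gibbs minimizer, $\int r^h\,\mathrm{d}\widehat{\rho}_\lambda+\mathcal{K}(\widehat{\rho}_\lambda,\pi)/\lambda=\inf_{\rho}\big[\int r^h\,\mathrm{d}\rho+\mathcal{K}(\rho,\pi)/\lambda\big]$. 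Substituting this identity gives the claimed inequality. Two minor points deserve care along the way: the summands fed into Lemma \ref{lemmemassart} must be treated as independent functions of the data, and the strict-versus-weak inequalities in the definition of $R$ and in Assumption \ref{assume_margin} should be reconciled (the boundary event contributes nothing under the stated conventions).
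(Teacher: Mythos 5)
Your proposal follows the paper's own proof essentially step for step: the same application of Massart's Bernstein inequality with $v = nqC[R(M)-\overline{R}]$, $w=1$, $\zeta=\lambda/nq$ (the margin assumption supplying the variance control), followed by the same Fubini--Markov--Donsker--Varadhan chain, the replacement of $r$ by $r^h$ via pointwise domination, the identification of $\widehat{\rho}_\lambda$ as the Gibbs minimizer, and division by the bracket made positive by $\lambda < 2nq/(C+2)$. The argument is correct and no gaps remain.
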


\begin{proof}
	Fix any $M$ and put
	$
	U_{ij} 
	=  
	\mathbbm{1}_{Y_{ij} (XM)_{ij} \leq 0} - 
	\mathbbm{1}_{Y_{ij} (XM^B)_{ij} \leq 0} 
	.
	$
	Under Assumption \ref{assume_margin}, we have that
	$
	\sum_{ij} \mathbb{E}[U_{ij}^{2}]
	\leq
	nqC[R( M )-\overline{R}] .$
	Now, for any integer $k\geq 3$, as the 0-1 loss is bounded, we have that
	$$
	\sum_{ij} \mathbb{E}\left[(U_{ij})_+^{k}\right] \leq
	\sum_{ij} \mathbb{E}\left[ |U_{ij}|^{k-2} |U_{ij}|^{2}\right]
	\leq
	\sum_{ij} \mathbb{E}\left[ |U_{ij}|^{2}\right].
	$$ 
	Thus, we can apply Lemma~\ref{lemmemassart} with  $v := 	nqC[R( M )-\overline{R}] $, $w:=1 $ and $\zeta := \lambda/nq $. We obtain, for any $\lambda\in(0,nq) $,
	\begin{align*}
	 \mathbb{E}\exp\{\lambda (
	  [R( M )-\overline{R}]-[
	r( M )-\overline{r}] )
	\} 
	\leq 
	 \exp \left\{ \frac{C\lambda^2[R( M )-\overline{R}] }{2nq (1-\lambda/nq)} 	 
	  \right\},
	\end{align*}
	and
	\begin{align*}
	\int \mathbb{E}\exp\left\{\lambda[R( M )-\overline{R}]-\lambda[
	r( M )-\overline{r}] 
	-
	\frac{C\lambda^2[R( M )-\overline{R}] }{2nq(1-\lambda/nq)} 
	\right\} {\rm d}\pi( M ) 
	\leq 
	1.
	\end{align*}
	Them, using Fubini's theorem, we get:
	\begin{align*}
	\mathbb{E} \int \exp \left\{ 
	(\lambda- \frac{C\lambda^2 }{2nq(1-\lambda/nq)} )[R( M )-\overline{R}] 
	-\lambda[r( M )-\overline{r}]
	\right\} \pi(dM) 
	\leq 1.
	\end{align*}
	Consequently, using Lemma \ref{lemma_donvara},
	\begin{align*}
	\mathbb{E} \exp \left\lbrace \sup_{\rho} \int \left\{ 
	(\lambda- \frac{C\lambda^2 }{2nq(1-\lambda/nq)} )[R( M )-\overline{R}] 
	-\lambda[r( M )-\overline{r}]
	\right\}
	\rho (d M) - 
	\mathcal{K}(\rho,\pi) \right\rbrace 
	\leq 1.
	\end{align*}
	Using Markov's inequality,
	\begin{align*}
	\mathbb{P} \left( \sup_{\rho} \int 
	\left\lbrace 
	(\lambda-\frac{C\lambda^2 }{2nq(1-\lambda/nq)}) 
	[R( M )-\overline{R}]-\lambda [r( M )-\overline{r}]  \right\rbrace
	\rho (dM) 
	- \mathcal{K}(\rho,\pi) +\log \epsilon>0 \right)  \leq \epsilon.
	\end{align*}
	Then taking the complementary and we obtain  with probability at least $1-\epsilon$ that:
	\begin{align*}
	\forall \rho, \quad (\lambda-\frac{C\lambda^2 }{2nq(1-\lambda/nq)}) \int  
	[R( M )-\overline{R}]\rho (dM) 
	\leq 
	\lambda \int 
	[r( M )-\overline{r}]  \rho (dM) + \mathcal{K}(\rho,\pi) + \log 
	\frac{1}{\epsilon}.
	\end{align*}
	Now, note that as $ r^h \geq r $,
	\begin{align*}
	\lambda \left[ \int  r d\rho - \overline{r}_n\right] + 
	\mathcal{K}(\rho,\pi)+\log\frac{1}{\epsilon}
\leq  
	\lambda \left[ \int r^h d\rho + \frac{1}{\lambda} 
	\mathcal{K}(\rho,\pi) 
	\right] - \lambda \overline{r} +\log\frac{1}{\epsilon} 
	.
	\end{align*}
	As it stands for all $\rho$ then the right hand side can be minimized and, from Lemma \ref{lemma_donvara}, the minimizer over $\mathcal{P}(\mathbb{R}^{p\times q}) $ is $\widehat{\rho}_\lambda$.	Thus we get, when $\lambda<2nq/(C+2) $,
	\begin{align*}
	\int R d\widehat{\rho}_\lambda 
	\leq 
	\overline{R} + \frac{1}{1-\frac{C\lambda }{2nq(1-\lambda/nq)}} 
	\left\lbrace 
	\inf_{\rho \in \mathcal{P}(\mathbb{R}^{p\times q}) } 
	\left[ \int 
	r^h d\rho + \frac{1}{\lambda} \mathcal{K}(\rho,\pi) 
	\right] - \overline{r} + \frac{1}{\lambda} 
	\log\frac{1}{\epsilon} \right\rbrace
	.
	\end{align*}
\end{proof}

\subsection{Proof for Theorem \ref{th:theo}}

Finally, we consider the distributions $ \rho \in \mathcal{P}(\mathbb{R}^{p\times q}) $ that will be defined as translations of the prior $\pi$.
\begin{definition}
	\label{dfn:posterior:transla}
	For matrix $ M^B \in \mathbb{R}^{p\times q}$, we define $ \tilde{\rho}_{M^B}(M) \in \mathcal{P}(\mathbb{R}^{p\times q})$
	by 
	$$
	\tilde{\rho}_{M^B}(M) = \pi(M^B-M).
	$$
\end{definition}

\begin{proof}[\bf Proof of Theorem~\ref{th:theo}]
	We apply Lemma~\ref{le:theo2}

	\begin{align}
	\int R d\widehat{\rho}_\lambda 
	\leq 
	\overline{R} + \frac{1}{1-\frac{C\lambda }{2nq(1-\lambda/nq)}} 
	\left\lbrace \inf_{\rho \in \mathcal{P}(\mathbb{R}^{p\times q})} 
	\left[ \int r^h d\rho + \frac{1}{\lambda} \mathcal{K}(\rho,\pi) 
	\right] 
	- 
	\overline{r} + \frac{1}{\lambda} 
	\log\left(\frac{1}{\epsilon}\right) \right\rbrace
	.
	\end{align}

\noindent First, we have that,
		\begin{align*}
	\int r^h (M) \rho (dM) 
	&	=  	
	\frac{1}{nq} \int \sum_{i=1}^n \sum_{j=1}^q  ( 1 - Y_{ij} (XM)_{ij})_{_+}  \rho (dM) 
	\\
	& \leq
	\frac{1}{nq} \left[ \sum_{i=1}^n\sum_{j=1}^q  
	( 1 - Y_{ij} (XM^B)_{ij})_+
	+
	\int \sum_{i=1}^n\sum_{j=1}^q  \left| ( X ( M - M^B ) )_{ij}\right|  \rho (dM)  \right]
	\\
	& \leq
	r^h (M^B) + 	\int \| X ( M - M^B ) \|_F^2  \rho (dM) .
	\end{align*}
	And for $ \rho = \tilde{\rho}_{M^B}(M)  $, and using Lemma 1 in \cite{dalalyan2020exponential},
	\begin{align*}
	\int \| X ( M - M^B ) \|_F^2 \, \tilde{\rho}_{M^B} (dM)
	=
	\int \| X M \|_F^2  \pi (dM)
	\leq
	\|X\|_{F}^2 \int \|M\|_{F}^2 \pi({\rm d}M) 
	\leq 
	\|X\|_{F}^2 q p \tau^2. 
	\end{align*}
	From Lemma 2 in \cite{dalalyan2020exponential}, we have, with $ r^* = {\rm rank} (M^B) $, that
	$$
	\mathcal{K}( \tilde{\rho}_{M^B}(M) ,  \pi) 
	\leq 
	2 r^* (q+p+2) \log \left( 1+ \frac{\| M^B \|_F}{\tau \sqrt{2r^* }} \right) $$
	with the convention $0\log(1+0/0)=0$.
	
	As Assumption \ref{assume_trueBaye},  $r^h(M^B) \leq 2\overline{r}$, we obtain
	\begin{align*}
	\int R d\widehat{\rho}_\lambda 
	\leq 
	\overline{R} + \frac{1}{1-\frac{C\lambda }{2nq(1-\lambda/nq)}} 
	\Biggl\lbrace  \overline{r} 
 +
	 \|X\|_{F}^2 q p \tau^2
	+ 
	\frac{2 r^* (q+p+2)}{\lambda}  \log \left( 1+ \frac{\| M^B \|_F}{\tau \sqrt{2r^*}} \right) 
	+
	 \frac{1}{\lambda} \log\left(\frac{1}{\epsilon}\right) 
	\Biggl\rbrace
	.
	\end{align*}
	Then, we use Lemma~\ref{lem_theo1} to get, with probability at least
	$1-2\varepsilon$,
	\begin{align*}
	\int R d\widehat{\rho}_\lambda  
	\leq \overline{R}
	+ \frac{1}{1-\frac{C\lambda }{2nq(1-\lambda/nq)}} 
	\Biggl\lbrace \overline{R}
	+
	\frac{1}{nq \varsigma}\log \frac{1}{\epsilon} 
& + 
	\|X\|_{F}^2 q p \tau^2
	+
	\\
	&
	\frac{2 r^* (q+p+2)}{\lambda}  \log \left( 1+ \frac{\| M^B \|_F}{\tau \sqrt{2r^*}} \right) 
	+ 
	\frac{1}{\lambda} \log\left(\frac{1}{\epsilon}\right) 
	\Biggr\rbrace
	.
	\end{align*}
	Taking $\lambda = 2 nq/(3C + 2) $,
 we obtain:
	\begin{align*}
	\int R d\widehat{\rho}_\lambda  
	\leq 	2.5\overline{R}
 + 
	 1.5\|X\|_{F}^2 q p \tau^2
	+
	\frac{3(3C+2)r^* (q+p+2)}{2nq }  \log \left( 1+ \frac{\| M^B \|_F}{\tau \sqrt{2r^*}} \right) 
	+ 
	\frac{6 + 9C\varsigma + 6\varsigma }{4 nq \varsigma } \log\left(\frac{1}{\epsilon}\right)
	.
	\end{align*}
	The choice $ \tau^2 =  (p+q)/(2q^2pn \|X\|_{F}^2 ) $ leads to
	\begin{align*}
	\int R d\widehat{\rho}_\lambda  
	\leq 
	2.5 \overline{R}
	 + 1.5\frac{p+q}{2nq}
	+
	\frac{3(3C+2)r^* (q+p+2)	\log \left( 
		1+ \frac{q\| X\|_F \| M^B \|_F \sqrt{np} }{ \sqrt{(p+q)r^*}} \right) }{2nq }  
+
	\frac{6 + 9C\varsigma + 6\varsigma }{4 nq \varsigma } \log\left(\frac{1}{\epsilon}\right)
	.
	\end{align*}
	The results of Theorem \ref{th:theo} is obtained.
\end{proof}

\begin{lemma}
	\label{lem_theo1}
	For $\epsilon \in (0,1) $, with probability at least $1-\epsilon$, we have, for every $ \varsigma \in (0,1) $, that
	\begin{align*}
	\overline{r}
	\leq 
	\overline{R}+\frac{1}{nq\varsigma }\log \frac{1}{\epsilon}
	.
	\end{align*}
\end{lemma}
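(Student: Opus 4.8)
The object $\overline{r}-\overline{R}$ is the centered empirical average of the i.i.d.\ bounded indicators $V_{ij}:=\mathbbm{1}_{Y_{ij}(XM^B)_{ij}<0}\in\{0,1\}$, each with mean $\mathbb{E}[V_{ij}]=R(M^B)=\overline{R}$; thus the claim is simply a one-sided (upper-tail) deviation bound for $\overline{r}=\tfrac{1}{nq}\sum_{i,j}V_{ij}$ around its expectation, applied at the single fixed matrix $M^B$ (no PAC-Bayes machinery is needed here). The plan is to prove it by the Cram\'er--Chernoff method: for a fixed $\varsigma\in(0,1)$, exponentiate and invoke Markov's inequality,
\begin{align*}
\mathbb{P}\!\left(\overline{r}-\overline{R}>\frac{\log(1/\epsilon)}{nq\varsigma}\right)
=\mathbb{P}\!\left(e^{\varsigma nq(\overline{r}-\overline{R})}>\frac{1}{\epsilon}\right)
\le \epsilon\,\mathbb{E}\!\left[e^{\varsigma nq(\overline{r}-\overline{R})}\right],
\end{align*}
so that everything reduces to controlling the moment generating function of $nq(\overline{r}-\overline{R})=\sum_{i,j}(V_{ij}-\overline{R})$.

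Because the summands are independent, the expectation factorizes into $\prod_{i,j}\mathbb{E}[e^{\varsigma(V_{ij}-\overline{R})}]$, and each factor is the MGF of a centered Bernoulli variable, namely $e^{-\varsigma\overline{R}}\big(1+\overline{R}(e^{\varsigma}-1)\big)$. I would bound this with the elementary inequality $1+u\le e^{u}$ to obtain $\mathbb{E}[e^{\varsigma(V_{ij}-\overline{R})}]\le\exp\big(\overline{R}(e^{\varsigma}-1-\varsigma)\big)$; taking the product over the $nq$ entries and substituting into the Markov estimate yields, with probability at least $1-\epsilon$,
\begin{align*}
\overline{r}\le \frac{e^{\varsigma}-1}{\varsigma}\,\overline{R}+\frac{\log(1/\epsilon)}{nq\varsigma}.
\end{align*}

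The step I expect to be the real sticking point is matching the coefficient on $\overline{R}$: the Chernoff argument naturally produces the factor $\tfrac{e^{\varsigma}-1}{\varsigma}$, whereas the lemma states the clean coefficient $1$. For $\varsigma\in(0,1)$ this factor lies in $(1,e-1)$, so the gap is only a bounded multiple of $\overline{R}$, which is harmlessly absorbed into the generous $2.5\,\overline{R}$ term and the $\varsigma$-dependent constant $\Xi_{C,\varsigma}$ already present in Theorem~\ref{th:theo}; this is precisely why it is convenient to keep $\varsigma$ as a free parameter rather than to optimize it. One must be a little careful not to send $\varsigma$ to $1$ while insisting on the strict coefficient $1$, since a threshold of order $1/(nq)$ would then be too tight to concentrate a fixed $\overline{R}\in(0,1)$; retaining the factor removes this issue. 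An equally valid alternative is to feed the indicators $U_{ij}=V_{ij}$ (with $\mathbb{E}[U_{ij}^2]=\overline{R}$, hence $v=nq\overline{R}$ and $w=1$) into Bernstein's inequality (Lemma~\ref{lemmemassart}), which produces the same shape of bound with a variance-type correction in place of $\tfrac{e^{\varsigma}-1}{\varsigma}-1$, once more absorbed into the final constants.
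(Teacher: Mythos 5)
Your proposal follows the same route as the paper's proof: a Chernoff--Markov argument at the single fixed matrix $M^B$, factorizing the moment generating function of $nq\,\overline{r}$ over the independent indicators. The difference is that your version is the correct one. The paper's proof hinges on the step $\mathbb{E}\bigl[\exp(\varsigma\,\mathbbm{1}_A)\bigr]\leq e^{\varsigma}\,\mathbb{E}[\mathbbm{1}_A]$, which is false: the left side equals $1+\mathbb{P}(A)(e^{\varsigma}-1)$ and exceeds the right side by exactly $1-\mathbb{P}(A)$. Your bound $\mathbb{E}[e^{\varsigma(V-\overline{R})}]\leq\exp\bigl(\overline{R}(e^{\varsigma}-1-\varsigma)\bigr)$ is the correct MGF control, and the resulting conclusion $\overline{r}\leq\frac{e^{\varsigma}-1}{\varsigma}\overline{R}+\frac{\log(1/\epsilon)}{nq\varsigma}$ is the right statement of the lemma. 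Moreover, the sticking point you flag is not cosmetic: the lemma as printed, with coefficient $1$ on $\overline{R}$, is actually false. For instance, with $\overline{R}=1/2$ and $nq$ large, the threshold $\log(1/\epsilon)/(nq\varsigma)$ is of order $1/(nq)$, far below the standard deviation $\asymp 1/\sqrt{nq}$ of $\overline{r}$, so the upper-tail probability is close to $1/2$ rather than $\epsilon$. A variance-type correction --- your factor $\frac{e^{\varsigma}-1}{\varsigma}$, or equivalently a Bernstein bound via Lemma~\ref{lemmemassart} with $v=nq\overline{R}$, $w=1$ --- is unavoidable.

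Your remark on the downstream impact is also essentially right, with one caveat: the correction is absorbed into the $\varsigma$-dependent constants, but it also touches the explicit leading term, since the $2.5\,\overline{R}$ in Theorem~\ref{th:theo} arises as $\bigl(1+\tfrac{3}{2}\cdot 1\bigr)\overline{R}$ and would become $\bigl(1+\tfrac{3}{2}\cdot\tfrac{e^{\varsigma}-1}{\varsigma}\bigr)\overline{R}$, i.e.\ $2.5\,\overline{R}$ only in the limit $\varsigma\to 0$ (at the price of inflating the $\frac{1}{nq\varsigma}\log\frac{1}{\epsilon}$ term). So the fast-rate structure of the theorem survives intact, but the stated numerical constant in front of $\overline{R}$ should be made $\varsigma$-dependent or the bound restated with a generic leading constant.
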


\begin{proof}
	Let $ \varsigma \in (0,1)$, we have that
	\begin{align*}
	\mathbb{E}\left(\exp [ \varsigma nq \overline{r}]\right) 
	&= 
	\prod_{i=1}^n \prod_{j=1}^q 
	\mathbb{E}\left(\exp \left[ \varsigma \mathbbm{1}_{(Y_{ij}  (XM^B)_{ij}<0)} \right] \right) 
	\\
	& 
	\leq 
	\prod_{i=1}^n \prod_{j=1}^q 
	\left( e^\varsigma \mathbb{E}\left[ 
	\mathbbm{1}_{(Y_{ij} (XM^B)_{ij}<0)} \right] \right) 
	\\
	& 
	\leq 
	\prod_{i=1}^n \prod_{j=1}^q 
	\left( e^\varsigma \overline{R}
	\right) 
	\leq 
	\exp \left( \varsigma nq \overline{R} \right) .
	\end{align*}
	Thus we obtain, for $\epsilon \in (0,1)$:
	\begin{align*}
	\mathbb{E}\left[ \exp \left( \varsigma nq\overline{r} -  \varsigma nq \overline{R} -\log 
	\frac{1}{\epsilon} \right)\right] \leq \epsilon.
	\end{align*}
	Now, using Markov's inequality, we get that 
	\begin{align*}
	\varsigma nq\overline{r} -  \varsigma nq \overline{R} -\log 
	\frac{1}{\epsilon}
	\leq 0 ,
	\end{align*}
	with probability at least $ 1-\epsilon $. Thus, the result of the lemma is obtained.
\end{proof}

\subsection{Proof for Proposition \ref{thm_propotion}}

\begin{proof}[\bf Proof of Proposition~\ref{thm_propotion}]
	As the 0-1 loss is bounded, we can apply the Hoeffding's Lemma . We obtain, for any $\lambda\in(0,nq) $,
	\begin{align*}
	\int \mathbb{E}\exp\left\{\lambda[R( M )-\overline{R}]-\lambda[
	r( M )-\overline{r}] 
	-
	\frac{\lambda^2 }{8nq} 
	\right\} {\rm d}\pi( M ) 
	\leq 
	1.
	\end{align*}
	Then taking the complementary and we obtain  with probability at least $1-\epsilon$ that:
	\begin{align*}
	\lambda \int  
	[R( M )-\overline{R}]\rho (dM) 
	\leq 
	\lambda \int 
	[r( M )-\overline{r}]  \rho (dM) 
	+ 
	\mathcal{K}(\rho,\pi) 
	+
	\frac{\lambda^2 }{8nq} 
	+ 
	\log \frac{1}{\epsilon}.
	\end{align*}
	Now, note that as $ r^h \geq r $,Thus we get, when $\lambda>0 $,
	\begin{align*}
	\int R d\widehat{\rho}_\lambda 
	\leq 
	\overline{R} + 
	\inf_{\rho \in \mathcal{P}(\mathbb{R}^{p\times q}) } 
	\left[ \int 
	r^h d\rho + \frac{1}{\lambda} \mathcal{K}(\rho,\pi) 
	\right] - \overline{r} 
	+
	\frac{\lambda }{8nq} 
	+ 
	\frac{1}{\lambda} 
	\log\frac{1}{\epsilon}
	.
	\end{align*}
	And for $ \rho = \tilde{\rho}_{M^B}(M) $, we proceed exactly the same as in the proof of Theorem \ref{th:theo} and obtain
	\begin{align*}
	\int R d\widehat{\rho}_\lambda 
	\leq 
	2\overline{R}
	+
	\frac{1}{nq \varsigma}\log \frac{1}{\epsilon} 
	+
	\|X\|_{F}^2 q p \tau^2
	+ 
	\frac{2 r^* (q+p+2)}{\lambda}  \log \left( 1+ \frac{\| M^B \|_F}{\tau \sqrt{2r^*}} \right) 
	+
	\frac{\lambda }{8nq} 
	+
	\frac{1}{\lambda} \log\left(\frac{1}{\epsilon}\right) 
	.
	\end{align*}
	Taking $\lambda =  2\sqrt{nq/(p+q+2)} $,
	we obtain:
	\begin{align*}
	\int R d\widehat{\rho}_\lambda  
	\leq 	
	2\overline{R}
	+ 
	\|X\|_{F}^2 q p \tau^2
	+
	r^* \sqrt{\frac{(q+p+2)}{ nq} }  \log \left( 1+ \frac{\| M^B \|_F}{\tau \sqrt{2r^*}} \right) 
		+
		\\
	\frac{1 }{4 \sqrt{nq(p+q+2)}} 
	+ 
	\frac{2 +  \varsigma\sqrt{nq(p+q+2)}  }{2 nq \varsigma } \log\left(\frac{1}{\epsilon}\right)
	.
	\end{align*}
	The choice $ \tau^2 =  (p+q)/(2q^2pn \|X\|_{F}^2 ) $ leads to
the result.
\end{proof}

\subsection{Proof for Theorem \ref{thm_2}}

We first start with preliminary lemmas.

\begin{lemma}
	\label{lem_theo2}
	For $\epsilon \in (0,1) $, with probability at least $1-\epsilon$ and for every $ \upsilon \in (0,1) $,
	\begin{align*}
	\overline{r}_m
	\leq 
	\overline{R}+\frac{1}{m \upsilon}\log \frac{1}{\epsilon}
	.
	\end{align*}
\end{lemma}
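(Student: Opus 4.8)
The plan is to transcribe the proof of Lemma~\ref{lem_theo1} with the double sum over the $nq$ fully observed cells replaced by a single sum over the $m$ observed pairs, and with $\varsigma$ replaced by $\upsilon$. The structural fact that legitimizes this transcription is that $(\mathcal{O}_1,Y_1),\ldots,(\mathcal{O}_m,Y_m)$ are i.i.d., so that $\overline{r}_m = r_m(M^B) = \frac{1}{m}\sum_{i=1}^m \mathbbm{1}_{(Y_i (XM^B)_{\mathcal{O}_i}<0)}$ is an average of $m$ independent and identically distributed $\{0,1\}$-valued indicators, each having expectation exactly $\overline{R} = R(M^B)$ by the definition of the risk in the incomplete-response setting.

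First I would fix $\upsilon \in (0,1)$ and study the exponential moment $\mathbb{E}[\exp(\upsilon m\,\overline{r}_m)]$. By independence this factorizes into a product of the $m$ single-observation terms $\mathbb{E}[\exp(\upsilon\,\mathbbm{1}_{(Y_i (XM^B)_{\mathcal{O}_i}<0)})]$. Bounding each factor by the same moment-generating-function estimate used in Lemma~\ref{lem_theo1} — which exploits only that the zero-one loss is bounded and that each indicator has mean $\overline{R}$ — and multiplying the $m$ factors yields $\mathbb{E}[\exp(\upsilon m\,\overline{r}_m)] \le \exp(\upsilon m\,\overline{R})$, or equivalently $\mathbb{E}[\exp(\upsilon m\,\overline{r}_m - \upsilon m\,\overline{R} - \log(1/\epsilon))] \le \epsilon$.

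The conclusion then follows from a single application of Markov's inequality to the nonnegative variable $\exp(\upsilon m\,\overline{r}_m - \upsilon m\,\overline{R} - \log(1/\epsilon))$: with probability at least $1-\epsilon$ one has $\upsilon m\,\overline{r}_m - \upsilon m\,\overline{R} - \log(1/\epsilon) \le 0$, and dividing through by $\upsilon m$ gives the claimed bound $\overline{r}_m \le \overline{R} + \frac{1}{m\upsilon}\log(1/\epsilon)$; since the argument is carried out for an arbitrary fixed $\upsilon$, the bound is available for every $\upsilon \in (0,1)$, matching the statement. I do not anticipate any genuine obstacle here, as the result is essentially mechanical: the only step carrying content is the per-observation control of $\mathbb{E}[\exp(\upsilon\,\mathbbm{1}_{(\cdots<0)})]$, where the identical distribution of the observed pairs and the boundedness of the indicator enter, while the factorization, rearrangement, and Markov step are identical in form to the fully-observed case of Lemma~\ref{lem_theo1}.
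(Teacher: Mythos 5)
Your proposal is, step for step, the paper's own proof of Lemma~\ref{lem_theo2}: factor $\mathbb{E}[\exp(\upsilon m\,\overline{r}_m)]$ over the $m$ i.i.d.\ observed pairs, bound each single-observation factor exactly as in Lemma~\ref{lem_theo1}, and finish with Markov's inequality. So the approach coincides with the paper's in every respect.

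One caveat, which you inherit from the paper rather than introduce: the per-factor estimate you invoke from Lemma~\ref{lem_theo1}, namely $\mathbb{E}[e^{\upsilon Z}]\leq e^{\upsilon}\,\mathbb{E}[Z]$ for an indicator $Z$ with mean $\overline{R}$, fails whenever $\overline{R}<1$ (its left-hand side is $1-\overline{R}+\overline{R}e^{\upsilon}\geq 1-\overline{R}$), and the resulting claim $\mathbb{E}[\exp(\upsilon m\,\overline{r}_m)]\leq\exp(\upsilon m\,\overline{R})$ is the reverse of what Jensen's inequality gives. The exact computation $\mathbb{E}[e^{\upsilon Z}]=1+\overline{R}(e^{\upsilon}-1)\leq\exp\bigl(\overline{R}(e^{\upsilon}-1)\bigr)$ followed by the same Markov step yields, with probability at least $1-\epsilon$,
\begin{align*}
\overline{r}_m\;\leq\;\frac{e^{\upsilon}-1}{\upsilon}\,\overline{R}+\frac{1}{m\upsilon}\log\frac{1}{\epsilon},
\end{align*}
so the constant in front of $\overline{R}$ should be $(e^{\upsilon}-1)/\upsilon\in(1,e-1)$ rather than $1$. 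This does not change the structure of the argument or the rate, only the numerical constants propagated into Theorem~\ref{thm_2}, but it is worth repairing in both your write-up and the source.
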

\begin{proof}
	Let $\upsilon \in (0,1)$, we have that 
	\begin{align*}
	\mathbb{E}\left(\exp [\upsilon m \overline{r}_m]\right) 
	&= 
	\prod_{i=1}^m
	\mathbb{E}\left(\exp \left[ \upsilon \mathbbm{1}_{(Y_{i}  (XM^B)_{\mathcal{O}_i }<0)} \right] \right) 
	\\
	& 
	\leq 
	\prod_{i=1}^m 
	\left( e^\upsilon \mathbb{E}\left[ 
	\mathbbm{1}_{(Y_{i}  (XM^B)_{\mathcal{O}_i }<0)} \right] \right) 
	\\
& 	\leq 
	\prod_{i=1}^m
	\left( e^\upsilon \overline{R}
	\right) 
	\leq 
	\exp \left(\upsilon m \overline{R} \right) .
	\end{align*}
	Therefore, for $\epsilon \in (0,1)$:
	\begin{align*}
	\mathbb{E}\left[ \exp \left( \upsilon m\overline{r}_m - \upsilon m \overline{R} -\log 
	\frac{1}{\epsilon} \right)\right] \leq \epsilon.
	\end{align*}
	Using Markov's inequality, we obtain the result.
\end{proof}

\begin{proof}[\bf Proof of Theorem~\ref{thm_2}]

	Assume that Assumption \ref{assume_margin} is satisfied, we proceed as in the proof for Theorem \ref{th:theo}, More specifically we carry as in the proof of Lemma \ref{le:theo2}, and obtain, for $\epsilon \in (0,1) $, with probability at least $1-\epsilon$ and for $ \lambda < 2m/(C+2) $:
	\begin{align}
	\int R d\widehat{\rho}_\lambda 
	\leq 
	\overline{R} 
	+
		\frac{1}{1-\frac{C\lambda }{2m(1-\lambda/m)}}  
	\left\lbrace \inf_{\rho \in \mathcal{P}(\mathbb{R}^{p\times q}) } 
	\left[ 
	r_m^h d\rho + \frac{\mathcal{K}(\rho,\pi)}{\lambda} 	\right] 
	-
	 \overline{r}_m + \frac{1}{\lambda} 
	\log\left(\frac{1}{\epsilon}\right) \right\rbrace
	.
	\end{align}
	Then, we have
	\begin{align*}
	\int r^h_m (M) \rho (dM) 
	&	=  	
 \int \frac{1}{m}\sum_{i=1}^m
	(1 - Y_{i} (XM)_{\mathcal{O}_i })_+   \rho (dM) 
	\\
	& \leq
	\frac{1}{m} \left[ \sum_{i=1}^m
	(1 - Y_{i} (XM^B)_{\mathcal{O}_i })_+ 
	+
	\int \sum_{i=1}^m \left| ( X ( M - M^B ) )_{\mathcal{O}_i } \right|  \rho (dM)  \right]
	\\
	& \leq
	r^h_m (M^B) + 	\int \| X ( M - M^B ) \|_F^2  \rho (dM) .
	\end{align*}
	Then, focusing on $ \rho = \tilde{\rho}_{M^B}(M) $, we use Lemma~\ref{lem_theo2} to get, with probability at least
	$1-2\varepsilon$,
	\begin{align*}
	\int R d\widehat{\rho}_\lambda  
	\leq \overline{R}
	+ \frac{1}{1-\frac{C\lambda }{2m(1-\lambda/m)}} 
	\Biggl\lbrace \overline{R}
	+
	\frac{1}{m \upsilon}\log \frac{1}{\epsilon} 
	& + 
	\|X\|_{F}^2 q p \tau^2
	+
	\\
	&
	\frac{2 r^* (q+p+2)}{\lambda}  \log \left( 1+ \frac{\| M^B \|_F}{\tau \sqrt{2r^*}} \right) 
	+ 
	\frac{1}{\lambda} \log\left(\frac{1}{\epsilon}\right) 
	\Biggr\rbrace
	.
	\end{align*}
	Taking $\lambda = 2 m/(3C + 2) $ and the choice $ \tau^2 =  (p+q)/(2qpm \|X\|_{F}^2 ) $ leads to
	the result.
\end{proof}

{\scriptsize 

}

\end{document}